\newtheorem{theorem}{Theorem}
\newtheorem{lemma}{Lemma}
\newtheorem{definition}{Definition}
\newtheorem{corollary}{Corollary}
\title{A Real Time Monitoring Approach for Bivariate Event Data}
\author[1,4]{Inez Maria Zwetsloot}
\author[2,*]{Tahir Mahmood}
\author[3]{Funmilola Mary Taiwo}
\author[1]{Zezhong Wang}
\affil[1]{Department of Advanced Design and Systems Engineering, City University of Hong Kong, Tat Chee Avenue, Kowloon, Hong Kong}
\affil[2]{Department of Technology, School of Science and Technology, The Open University of Hong Kong, 30 Good Shepherd Street, Ho Man Tin, Kowloon, Hong Kong}
\affil[3]{Department of Statistics, University of Manitoba, Winnipeg, MB R3T 2N2 Canada}
\affil[4]{School of Data Science, City University of Hong Kong, Tat Chee Avenue, Kowloon, Hong Kong}
\affil[*]{Corresponding author: Tahir Mahmood, tmahmood@ouhk.edu.hk}
\begin{document}
\maketitle

\begin{abstract}
Early detection of changes in the frequency of events is an important task, in, for example, disease surveillance, monitoring of high-quality processes, reliability monitoring and public health. In this article, we focus on detecting changes in multivariate event data, by monitoring the time-between-events (TBE). Existing multivariate TBE charts are limited in the sense that, they only signal after an event occurred for each of the individual processes. This results in delays (i.e., long time to signal), especially if it is of interest to detect a change in one or a few of the processes. We propose a bivariate TBE (BTBE) chart which is able to signal in real time. We derive analytical expressions for the control limits and average time-to-signal performance, conduct a performance evaluation and compare our chart to an existing method. The findings showed that our method is a realistic approach to monitor bivariate time-between-event data, and has better detection ability than existing methods. A large benefit of our method is that it signals in real-time and that due to the analytical expressions no simulation is needed. The proposed method is implemented on a real-life dataset related to AIDS.  
\end{abstract}
\textbf{Keywords}: early event detection; life-time expectancy; multivariate control chart; statistical process monitoring; time-between-events; real-time monitoring; superimposed process.

\newpage

\section{Introduction}\label{Sec:introduction}
Many diseases (e.g., chronic diseases) can be treated better if abnormal behavior of the disease is detected early. Early event detection is thus a critically important research problem in medical research and health surveillance \citep{mahmood2019monitoring}. We focus on the important question of how we can identify people with irregular longitudinal patterns of disease behavior. Acquired Immune Deficiency Syndrome (AIDS) is a chronic disease, which is a result of Human Immunodeficiency Virus (HIV) infection. Transfusion related AIDS data are collected by the center for disease control \citep{hu2014cross,moreira2021nonparametric}. The data consist of patients who were infected with HIV through blood or blood-product transfusion. The data records include the age of patients (categorized in adults and children), infection time (months between blood transfusion date (01 April, 1978) to HIV infection), induction time (months from HIV infection to AIDS diagnosis) and total time (sum of infection and induction time). We are interested in developing an event-based monitoring method which can be used for early detection of irregularity in event data, like this AIDS example.

Besides the application of early detection of irregular disease behavior, the detection of changes in dynamic event data has many other applications. For example, other healthcare related examples are monitoring of times to blindness in the eyes \citep{huster1989modelling,li2012statistical}, response time to different treatmeants \citep{gross1981paired,lu1991inference}, or recurrence time after (cancer) treatment \citep{byar1980veterans,chiou2018semiparametric}. In manufacturing processes, these methods can be used to detect changes in the production times of different batchesand to monitor the failure time systems \citep{flury2018multivariate,nelson1982applied}. Moreover, these methods can also be used for syndromic disease surveillance to detect specific symptoms in order to have early detection of disease outbreaks \citep{sparks2019real}.

Two types of methods have been proposed in the literature for monitoring of event data. One group of methods is the \emph{monitoring of count data}. Count data can be obtained by counting the number of occurred events in pre-specified time periods. Brief reviews of the monitoring methods for count data can be found in \cite{saghir2015control,ali2016overview} and \cite{ mahmood2019models}. However, the count based approach is not a real-time approach as one needs to wait until the end of each time period, e.g. a day, before changes can be detected \citep{zwetsloot2019review}. In addition, the selection of aggregation window length is always somewhat arbitrary. The second group of methods are \emph{Time-Between-Events} (TBE) control charts. With a TBE control chart, we can monitor the length of time between events. 
For recent studies on univariate TBE control charts, the reader is referred to \citet{sparks2019real,sparks2020monitoring}. Methods for multivariate TBE data are categorized into two types; methods for (a) vector-based event data and (b) point-process data \citep{zwetsloot2020multivariate}. Vector-based event data occur when multivariate TBE data are observed one vector at a time. For example, consider manufactured items that pass various process steps and for each step a processing time is recorded. This forms a vector of failure times: we obtain one observation for each step (assuming items do not get reprocessed). In point-process data, one event may occur several times before another showed up. For example, consider manufactured items that can fail in several ways and are repaired after failure. One may observe a failure type A twice before observing failure type B or C.  Our case study of interest, detection of abnormal behavior in disease behavior using the AIDS data, is an example of bivariate vector-based event data. Therefore, subsequently, we will focus on bivariate vector-based data. 

As far as we are aware, all literature on \emph{Multivariate Time-Between-Events} (MTBE) control charts are designed for vector-based event data. The most well known method is developed by \cite{xie2011two}. The authors considered Gumbel's Bivariate Exponential (GBE) distributed data and proposed a vector-based Multivariate Exponentially Weighted Moving Average (MEWMA) control chart. We will provide more details on existing MBTE control chart literature in Section 5. Noteworthy is that \emph{all} existing MTBE control chart methods have a built-in detection delay, which requires that one event is available for each of the $p$ variables under consideration. Hence, changes can only be detected when each variable has an observed event, as these events happen asynchronously in time, the methods have a built-in delay until the vector of event data is completely observed. For example, existing methods can only signal when we have observations of both events (e.g., in AIDS data; infection time and total time). Obviously, delays are undesirable when we wish to detect changes in the process as quickly as possible. Furthermore, it is easy to see how an extension from bivariate to a multivariate process will result in even longer delays in forming the vectors used for monitoring.

Therefore, in this article, we propose a novel and effective new method for \emph{real-time bivariate event-based monitoring}, called the bivariate timbe between event (BTBE) control chart. This method is designed for multivariate event data of the vector-based type. Our proposed method has real-time detection power and does not have a built-in delay like the existing methods. For instance, when we are interested in monitoring a patient's events time (say, infection time ($X_1$) and total time ($X_2$)) due to transfusion of blood, time $X_1$ is observed first and after which time $X_2$ is observed. In the proposed BTBE chart, time $X_1$ is plotted first and thereafter time $X_2$ is plotted for the monitoring. By this exercise, there is no need to wait for the occurrence of $X_2$ to signal a change in $X_1$. Moreover, an additional advantage of the proposed method is that it provides exact information about the root cause behind an out-of-control signal. We derive analytical expressions for the control limits and the average time-to-signal \emph{(ATS)}. 

The remainder of this article is organized as follows. The proposed method is introduced in Section~\ref{Sec:proposed_method}. Analytical expression for the theoretical performance of our method are given in Section~\ref{performance_measure}. The performance of our proposed method under different distributional environments is discussed in Section~\ref{Sec:sim.study} and a comparison with an existing method is presented in Sections~\ref{Sec:comparison}. Implementation of the proposed method on the real-life scenario is discussed in Section~\ref{Sec:case_study} and finally, the article is summarized in Section~\ref{Sec:conclusions}. Moreover, mathematical proofs and other details are provided in the \emph{Appendix} A-C and the supplementary material.

\section{Proposed Method}\label{Sec:proposed_method}
In this section, we present our proposed BTBE chart to monitor bivariate vector-based event data. This method has the ability to signal changes as data comes in and unlike the existing methods it has therefore no need to wait until we observe a complete vector of events.

\subsection{Data model and details}
Consider $X=(X_1,X_2)$ as a vector of bivariate lifetimes, where $X_1$ indicates the time to an event in the first subprocess and $X_2$ indicates the time to an event in the second subprocess. We assume that $(X_1,X_2)$ are drawn from a bivariate continuous probability density function $f(x_1,x_2; \theta)$ where $\theta$ is the parameter vector. We will discuss some typical choices for event time distributions $f()$ in Section \ref{sec:phase1}. We denote the corresponding cumulative joint distribution function, the joint survival function and the partial survival functions by 
\begin{equation} \label{eq:FS}
	\begin{split}
		F(x_1,x_2) &= P[X_1 \le x_1, X_2 \le x_2] \\
		S(x_1,x_2) &= P[X_1>x_1, X_2>x_2] \\
		S_1 (x_1,x_2) &= \frac{\partial}{\partial x_1} S(x_1,x_2)\\
		S_2 (x_1,x_2) &= \frac{\partial}{\partial x_2} S(x_1,x_2) 
	\end{split}
\end{equation}
As $(X_1,X_2)$ denote \emph{event times}, one of the two is observed \emph{first}. In order to model the data in real-time we define order statistics: $X_{(1)}$ is the first observed event time and $X_{(2)}$ is the second observed event time:
\begin{align} \label{eq:x}
	X_{(1)} = \min (X_1,X_2), \;\;\;\;\; X_{(2)} = \max(X_1,X_2)
\end{align}
As example, consider these four artificial event vectors:
\begin{equation*}
	X  = \begin{bmatrix} x_1 \\ x_2 \end{bmatrix} 
	= \begin{bmatrix} 2 \\ 3 \end{bmatrix} 
	, \begin{bmatrix} 3 \\ 1 \end{bmatrix} 
	, \begin{bmatrix} 2 \\ 2 \end{bmatrix} 
	, \begin{bmatrix} 5 \\ 1 \end{bmatrix}. 
\end{equation*}
The superimposed process consist of the events as they are observed and is given by 
\begin{align} \label{eq:super}
	[ 2,3,1,3,2,1,5]. 
\end{align}	
Note that for event 1 we observed $x_1$ first, for events 2 and 4 we observed $x_2$ first, and for event 3 we observed the two events at the same time and hence only have one time-between-events value in the superimposed process (Equation \eqref{eq:super}). For our method we plot the events as soon as the events occur and can therefore provide real-time signals. We plot each event consecutively on a chart, Figure \ref{fig:ex_btbe} illustrates this. 

\begin{figure}[h]
	\centering 
	\includegraphics[width=0.7\textwidth]{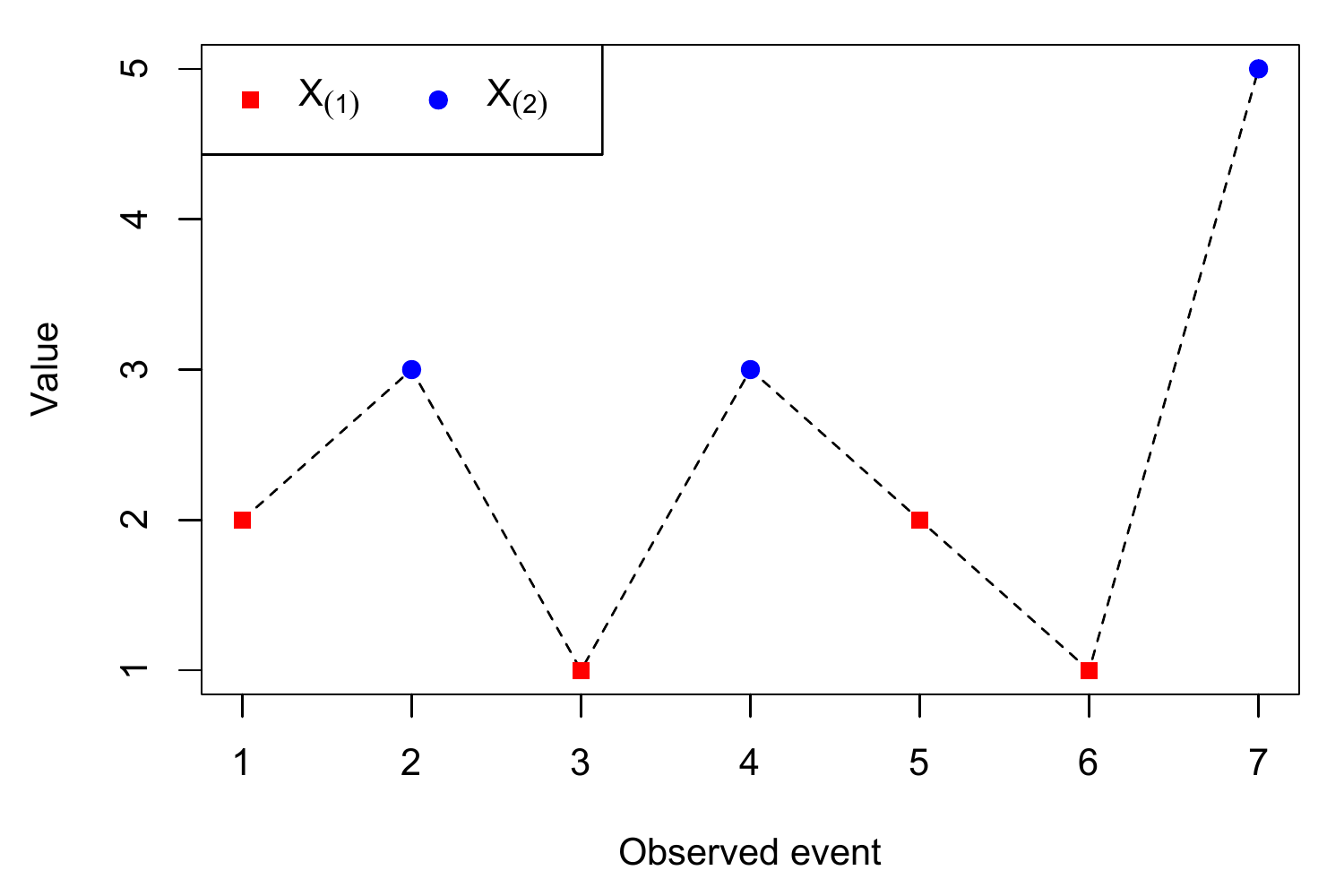}
	\caption{Illustration of superimposed data for four bivariate event vectors}
	\label{fig:ex_btbe}
\end{figure}
 
\subsection{Real-time distribution for Bivariate Event Data}
Control charts consist ofa plotting statistic and control limits. The plotting statistic (Q) equals the superimposed process as illustrated in Equation \eqref{eq:super}. Control limits are usually defined such that $F(LCL)=\alpha/2$ and $S(UCL)=\alpha/2$ for the lower (LCL) and upper (UCL) limits, respectively. Here $F()$ and $S()$ are the cumulative distribution and survival function for the univariate data stream under consideration. Furthermore, $\alpha$ is the false alarm rate and the chart signals when $Q<LCL$ or $Q>UCL$. 

In our case, we have a superimposed data stream, hence $F$ and $S$ will differ depending on whether the datapoint is observed first ($X_{(1)}$) or second $(X_{(2)})$ and whether the datapoint is from the first subprocess $(X_1)$ or the second subprocess $(X_2)$. We derive control limits for all these different situations. Therefor, our method will have dynamic control limits that differ for each plotted event. 

Next, to develop our method, we derive the cumulative distribution and survival functions of $X_{(1)}$ and $X_{(2)}$. Results are given in the Theorem 1 and 2, proofs can be found in \emph{Appendix} A.

\begin{theorem}\label{thm:F_x1}
	Assume $X=(X_1,X_2) \sim f(x_1,x_2)$ and let $X_{(1)}$ and $X_{(2)}$ be defined as in Equation~\eqref{eq:x}. Then the conditional cumulative distribution function of $X_{(1)}$ is defined as $F_{X_{(1)}}(x_{(1)})=P[X_{(1)}\leq x_{(1)}]/P[X_{(1)}<X_{(2)}]$ which is equal to
	\begin{align} \label{eq:Fx}
    F_{X_{(1)}}(x_{(1)}) = 
    \begin{cases}
		    	 \dfrac{\bigintsss_0^{x_{(1)}} S_1 (x_1,x_1) dx_1}{ \bigintsss_0^{\infty} S_1 (x_1,x_1) dx_1}   \;\;\; \text{if}\; X_1 < X_2 \\
				 \dfrac{\bigintsss_0^{x_{(1)}} S_2(x_2,x_2) dx_2 }{ \bigintsss_0^{\infty} S_2 (x_2,x_2) dx_2}  \;\;\; \text{if} \; X_1 > X_2  \\
				 \dfrac{\bigintsss_0^{x_{(1)}} f(x_1,x_1) dx_1}{ \bigintsss_0^{\infty} f(x_1,x_1) dx_1}      \;  \;\;\; \text{if} \; X_1 = X_2
	\end{cases}
\end{align}
Equivalently, the conditional survival function of $X_{(1)}$ is defined as 
$$S_{X_{(1)}}(x_{(1)})=P[X_{(1)}>x_{(1)}]/P[X_{(1)}<X_{(2)}]$$
which is equal to
\begin{align} \label{eq:Sx}
    S_{X_{(1)}}(x_{(1)}) = 
    \begin{cases}      
        \frac{ \bigintsss_{x_{(1)}}^{\infty} S_1(x_1,x_1)dx_1}{\bigintsss_0^{\infty} S_1 (x_1,x_1)dx_1} & \text{if}\; X_1 < X_2 \\ 
	 	\frac{ \bigintsss_{x_{(1)}}^{\infty} S_2 (x_2,x_2) dx_2}{ \bigintsss_{0}^{\infty} S_2 (x_2,x_2) dx_2} & \text{if}\; X_1 > X_2 \\
	 	\frac{ \bigintsss_{x_{(1)}}^{\infty} f(x_1,x_1)dx_1}{\bigintsss_{0}^{\infty} f (x_1,x_1)dx_1} & \text{if}\; X_1 =X_2 .
    \end{cases}
\end{align}
Where $S_1$ and $S_2$ are defined as in Equation \eqref{eq:FS}.
\end{theorem}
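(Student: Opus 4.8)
The plan is to reduce each of the three cases to an elementary double integral of the joint density $f$ over the appropriate region of the plane, and then to recognise the resulting one-dimensional integrand as a partial derivative of the joint survival function $S$. In each case the conditioning event ($\{X_1<X_2\}$, $\{X_1>X_2\}$, or $\{X_1=X_2\}$) supplies the normaliser in the denominator, so it suffices to compute the joint probabilities $P[X_{(1)}\le x_{(1)},\,X_1<X_2]$, etc., and then divide by their limits as $x_{(1)}\to\infty$.

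First I would treat the case $X_1<X_2$, in which $X_{(1)}=X_1$. Writing the event as a region and integrating the density,
\begin{equation*}
P[X_1\le x_{(1)},\,X_1<X_2]=\int_0^{x_{(1)}}\left(\int_{x_1}^{\infty} f(x_1,x_2)\,dx_2\right)dx_1 .
\end{equation*}
The key step is to identify the inner integral with a partial derivative of $S$: differentiating $S(x_1,x_2)=\int_{x_1}^{\infty}\int_{x_2}^{\infty} f(u,v)\,dv\,du$ with respect to $x_1$ gives $S_1(x_1,x_2)=-\int_{x_2}^{\infty} f(x_1,v)\,dv$, and evaluating along the diagonal $x_2=x_1$ yields $\int_{x_1}^{\infty} f(x_1,x_2)\,dx_2=-S_1(x_1,x_1)$. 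Substituting this gives $P[X_1\le x_{(1)},\,X_1<X_2]=-\int_0^{x_{(1)}} S_1(x_1,x_1)\,dx_1$, and letting $x_{(1)}\to\infty$ gives the normaliser $P[X_1<X_2]=-\int_0^{\infty} S_1(x_1,x_1)\,dx_1$. Forming the ratio, the two sign factors cancel and the first line of Equation~\eqref{eq:Fx} follows. The case $X_1>X_2$ is identical after interchanging the roles of the two coordinates, with $S_1$ replaced by $S_2$, giving the second line. For the survival function in Equation~\eqref{eq:Sx} I would repeat the same computation but take the outer integral over $[x_{(1)},\infty)$ rather than $[0,x_{(1)}]$; equivalently one may note $S_{X_{(1)}}=1-F_{X_{(1)}}$ with the same common normaliser.

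I expect the delicate points to be the following. First, the sign bookkeeping: since $S$ is non-increasing in each argument, $S_1$ and $S_2$ are non-positive, so each integral appearing in the statement is individually negative, and the formulas are correct only because these signs cancel between numerator and denominator; I would state this explicitly to avoid confusion. Second, the diagonal case $X_1=X_2$ is genuinely degenerate, because for a continuous $f$ the event $\{X_1=X_2\}$ has probability zero and cannot serve as a conditioning set directly. I would justify the third line by a limiting argument: conditioning instead on $\{|X_1-X_2|\le\varepsilon\}$, the joint probability $P[X_1\le x_{(1)},\,|X_1-X_2|\le\varepsilon]$ is, to first order, $2\varepsilon\int_0^{x_{(1)}} f(t,t)\,dt$, so the factor $2\varepsilon$ cancels in the ratio and, as $\varepsilon\to0$, the conditional distribution function converges to $\left(\int_0^{x_{(1)}} f(t,t)\,dt\right)\big/\left(\int_0^{\infty} f(t,t)\,dt\right)$, which is the stated expression. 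This limiting step, rather than any hard analysis, is where most of the care is needed.
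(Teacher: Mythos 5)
Your proposal is correct, and for the two main cases ($X_1<X_2$ and $X_1>X_2$) it is essentially the paper's own argument: both reduce the joint probability to $\int\!\int f$ over the relevant region and identify the inner integral as $-S_1(x_1,x_1)$ (resp. $-S_2(x_2,x_2)$), the only cosmetic difference being that the paper writes $f(x_1,x_2)=\frac{\partial}{\partial x_2}S_1(x_1,x_2)$ and applies the fundamental theorem of calculus, while you differentiate $S$ under the integral sign to get $S_1(x_1,x_2)=-\int_{x_2}^{\infty}f(x_1,v)\,dv$ and evaluate on the diagonal — the same identity read in opposite directions. Your explicit remark about the sign bookkeeping ($S_1,S_2\le 0$, with signs cancelling in the ratio) is a point the paper leaves implicit. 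Where you genuinely diverge is the case $X_1=X_2$. You treat $f$ as absolutely continuous, so that $\{X_1=X_2\}$ is a null event, and you recover the third branch as an $\varepsilon\to 0$ limit of conditioning on $\{|X_1-X_2|\le\varepsilon\}$; that argument is sound under the regularity you implicitly assume, but note that under this reading the third branch is vacuous in practice, since ties then never occur in the data. The paper's intended reading is different: in the models for which the diagonal case actually matters (MOBE and MOBW with $\lambda_{12}>0$), the joint law has a \emph{singular component} supported on the diagonal, $f(x_1,x_1)$ denotes the one-dimensional density of that component, and $P[X_1=X_2]=\int_0^{\infty}f(t,t)\,dt>0$ — so conditioning is legitimate directly and the formula follows by integrating the diagonal density, with no limiting argument needed. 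Both routes land on the same expression, and yours is arguably the more careful one for genuinely continuous $f$, but it is worth recognizing that the paper's superimposed-process construction (e.g., the simulation scenarios with $P[X_1=X_2]=0.1$) relies on the positive-probability, singular-component interpretation rather than the null-event one.
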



\par\noindent
For the second event time $X_{(2)}$ we derive the conditional cumulative distribution and survival functions. We condition on the realization of $X_{(1)}$ because the first event has been observed by the time we observe $X_{(2)}$.

\begin{theorem}\label{thm:F_x2}
	Assume $X=(X_1,X_2) \sim f(x_1,x_2)$ and let $X_{(1)}$ and $X_{(2)}$ be defined as in Equation~\eqref{eq:x}. Then the conditional cumulative distribution function $F_{X_{(2)}|X_{(1)}}(x_{(2)}|x_{(1)}) = P[X_{(2)} \leq x_{(2)}|X_{(1)}=x_{(1)}]$ is defined as 
	\begin{equation}\label{eq:Fx2}
	F_{X_{(2)}|X_{(1)}}(x_{(2)}|x_{(1)}) =
	\begin{cases}
	    1- \frac{S_1(x_1,x_2)}{S_1(x_1,x_1) } & \text{if}\; X_1 < X_2 \\
	    1- \frac{S_2(x_1,x_2)}{S_2(x_2,x_2)} & \text{if}\; X_1>X_2 \\
	    \end{cases}
	\end{equation}
Similarly, the conditional survival function $S_{X_{(2)}|X_{(1)}}(x_{(2)}|x_{(1)}) = P[X_{(2)} > x_{(2)}|X_{(1)}=x_{(1)}]$ is defined as
	\begin{equation} \label{eq:Sx2}
	S_{X_{(2)}|X_{(1)}}(x_{(2)}|x_{(1)}) =
	\begin{cases}
	  \frac{S_1(x_1,x_2)}{S_1(x_1,x_1) } & \text{if}\; X_1 < X_2 \\
	  \frac{S_2(x_1,x_2)}{S_2(x_2,x_2)} & \text{if}\; X_1>X_2 \\
	\end{cases}
	\end{equation}
\end{theorem}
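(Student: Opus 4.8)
The plan is to treat the two branches ($X_1 < X_2$ and $X_1 > X_2$) separately and, in each, to rewrite the conditioning event $\{X_{(1)} = x_{(1)}\}$ in terms of the original coordinates, so that the conditional survival function collapses to a ratio of two sub-densities. Those sub-densities can then be identified with the partial survival functions $S_1, S_2$ defined in Equation~\eqref{eq:FS}, which is what produces the stated formulas. I would derive the survival function $S_{X_{(2)}|X_{(1)}}$ first and obtain $F_{X_{(2)}|X_{(1)}} = 1 - S_{X_{(2)}|X_{(1)}}$ for free by complementation.

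Consider the branch $X_1 < X_2$. Here the first observed event comes from the first subprocess, so $X_{(1)} = X_1$ and $X_{(2)} = X_2$, and conditioning on $\{X_{(1)} = x_{(1)}\}$ is equivalent to conditioning on $\{X_1 = x_1\} \cap \{X_2 > x_1\}$: we have observed $X_1 = x_1$ and we know the second subprocess has not yet fired. Since $x_{(2)} = x_2 > x_1$, the event $\{X_2 > x_2\}$ is contained in $\{X_2 > x_1\}$, so I can write
\begin{equation*}
S_{X_{(2)}|X_{(1)}}(x_{(2)}|x_{(1)}) = P[X_2 > x_2 \mid X_1 = x_1,\, X_2 > x_1] = \frac{P[X_1 \in dx_1,\, X_2 > x_2]}{P[X_1 \in dx_1,\, X_2 > x_1]},
\end{equation*}
where each term denotes a density with respect to $x_1$.

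The key computational step is to evaluate these sub-densities. Writing $S(x_1,x_2) = \int_{x_1}^\infty \int_{x_2}^\infty f(u,v)\,dv\,du$ and differentiating under the integral sign gives
\begin{equation*}
-S_1(x_1,x_2) = -\frac{\partial}{\partial x_1} S(x_1,x_2) = \int_{x_2}^\infty f(x_1,v)\,dv,
\end{equation*}
which is exactly $P[X_1 \in dx_1,\, X_2 > x_2]/dx_1$. Substituting this expression (and its value at $x_2 = x_1$ in the denominator) makes the two sign factors cancel and yields $S_{X_{(2)}|X_{(1)}}(x_{(2)}|x_{(1)}) = S_1(x_1,x_2)/S_1(x_1,x_1)$, as claimed. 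The branch $X_1 > X_2$ is handled by the identical argument with the coordinates interchanged: now $X_{(1)} = X_2$, conditioning gives $\{X_2 = x_2\} \cap \{X_1 > x_2\}$, the relevant sub-density is $-S_2(x_1,x_2) = \int_{x_1}^\infty f(u,x_2)\,du$, and the ratio becomes $S_2(x_1,x_2)/S_2(x_2,x_2)$.

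I expect the main obstacle to be the careful treatment of the conditioning event rather than any hard analysis. The subtlety is that $\{X_{(1)} = x_{(1)}\}$ is a hybrid event: it pins a density-level value on the subprocess that fired first while simultaneously imposing a strict inequality (the other subprocess exceeds $x_{(1)}$) that encodes which coordinate is the minimum. I would argue that this inequality is precisely what separates the two branches, and that conditioning on it — rather than on the unconditional first-order-statistic event used in Theorem~\ref{thm:F_x1} — is why the normalizing factor $P[X_1 < X_2]$ does not reappear here: both numerator and denominator already live on the same branch, so that constant cancels. Checking that the differentiation under the integral is legitimate (absolute continuity of $f$ and finiteness of the inner integrals) is routine, and I would simply record it as a regularity condition inherited from the model.
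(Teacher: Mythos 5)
Your proposal is correct and takes essentially the same route as the paper: its Appendix~A proof formalizes exactly your sub-density ratio via preliminary lemmas showing $f_{X_{(1)}}(x_{(1)})=-S_1(x_1,x_1)$ and $P(X_{(1)}=x_{(1)},X_{(2)}>x_{(2)})=-S_1(x_1,x_2)$ (your denominator and numerator, obtained from the same identity $\int_{x_2}^{\infty} f(x_1,v)\,dv = -S_1(x_1,x_2)$), and then forms the same quotient in which the branch event cancels. The only cosmetic difference is that you recover the CDF by complementation $F=1-S$, whereas the paper derives it directly from a separate partial-distribution lemma giving $P(X_{(1)}=x_{(1)},X_{(2)}\leq x_{(2)})=S_1(x_1,x_2)-S_1(x_1,x_1)$; both are valid.
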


\noindent
Note that the case $X_1=X_2$ is not included in Theorem \ref{thm:F_x2}, because by definition there is no second event time when both are observed at the same time.

\subsection{Proposed Monitoring Method for Bivariate Vector-based Event Data}\label{Sec:BTBE_chart}

We design two variants of our BTBE chart: a one-sided upper and a two-sided control chart. 

\subsubsection{The one-sided upper BTBE chart}\label{Sec:One-sided-BTBE}
A one-sided control chart is proposed for monitoring increases in the mean of bivariate event data. The chart consists of plotting the following two components:
\begin{align*}
	\begin{split}
		UCL^i_{(j)} \;\; \text{for} \;\; j=1,2 \;\; \text{and} \;\; i=1,2,3,....	 \\
		X^i_{(j)} \;\; \text{for} \;\; j=1,2 \;\; \text{and} \;\; i=1,2,3,....	
	\end{split}
\end{align*}
where $X^i_{(j)}$ is the superimposed data stream. To obtain the upper control limits we use Equation~\eqref{eq:Sx} in Theorem \ref{thm:F_x1} and Equation \eqref{eq:Sx2} in Theorem \ref{thm:F_x2}. We set $UCL^i_{(1)}$ such that $\alpha  = S_{X_{(1)}}(UCL^i_{(1)})$ and we set $UCL^{i}_{(2)}$ such that $\alpha = S_{X_{(2)}|X_{(1)}}(UCL^i_{(2)}|x^i_{(1)})$. 


\subsubsection{Two-sided BTBE chart} \label{Sec:Two-sided-BTBE}
A two-sided control chart is proposed for monitoring both increases and decreases in the mean of the bivariate event data. The chart consists of plotting the following three components on a chart:
\begin{align*}
	\begin{split}
		UCL^i_{(j)} \;\; \text{for} \;\; j=1,2 \;\; \text{and} \;\; i=1,2,3,....	 \\
		X^i_{(j)} \;\; \text{for} \;\; j=1,2 \;\; \text{and} \;\; i=1,2,3,....		\\
		LCL^i_{(j)} \;\; \text{for} \;\; j=1,2 \;\; \text{and} \;\; i=1,2,3,....	 
	\end{split}
\end{align*}
where $X^i_{(j)}$ is the superimposed data stream. To obtain the upper control limits we use Equation~\eqref{eq:Sx} in Theorem \ref{thm:F_x1} and Equation \eqref{eq:Sx2} in Theorem \ref{thm:F_x2}. We set $UCL^i_{(1)}$ such that $\alpha/2  = S_{X_{(1)}}(UCL^i_{(1)})$ and we set $UCL^{i}_{(2)}$ such that $\alpha/2 = S_{X_{(2)}|X_{(1)}}(UCL^i_{(2)}|x^i_{(1)})$. To obtain the lower control limits we use Equation~\eqref{eq:Fx} in Theorem \ref{thm:F_x1} and Equation \eqref{eq:Fx2} in Theorem \ref{thm:F_x2}. We set $LCL^i_{(1)}$ such that $\alpha/2  = F_{X_{(1)}}(LCL^i_{(1)})$ and we set $LCL^i_{(2)}$ such that $\alpha/2 = F_{X_{(2)}|X_{(1)}}(LCL^i_{(2)}|x^i_{(1)})$. 
\\
\par\noindent
For both the one-sided upper and two-sided control charts, our method will have dynamic control limits that differ depending on the data at hand.

\subsubsection{The false alarm rate ($\alpha$)}  \label{Sec:alpha}
Control limits are usually set by setting $\alpha = \frac{1}{ARL_0}$, where the in-control average run length ($ARL_0$) is set to a pre-defined value. However, when the data represents \emph{event times}, it is common to design the chart using the average time-to-signal (ATS) rather than the ARL \citep{zwetsloot2020multivariate}.  We can relate the ARL and ATS by $ATS = ARL*E[TBE]$ and as $ARL=1/\alpha$ it holds that:
\begin{align} \label{eq:alpha}
	\alpha = \frac{E[TBE]}{ATS_0}
\end{align}
where $E[TBE]$ is the expected time between two events and $ATS_0$ is the in-control desired ATS value. The expected time between two events ($E[TBE]$) will be derived in Section~\ref{performance_measure}.

\subsection{Implementation of our proposed BTBE chart}\label{Sec:BTBE_charts}
Control charts are generally implemented in two phases: phase I for determining and estimating a distribution model for the data and phase II for the prospective monitoring \citep{montgomery2017design}. 

\subsubsection{Phase I} \label{sec:phase1}
In Phase I, a stable and in-control dataset should be gathered. Assume we have $X^i$ for $i=1,2,...,n$ observations with $X^i=(X^i_1,X^i_2)$ - a bivariate vector of events times - collected asynchronously in time. Next a distribution function $f(x;\theta)$ should be selected and fitted. A variety of possible models $f()$ have been proposed in the literature \citep{kotz2004continuous}, they have a long history and go back to the 1960’s. One important consideration to keep in mind during the selection process of an appropriate model is the failure mechanism of the underlying process. We introduce the three most well-known models:

\begin{itemize}
    \item The \textbf{Gumbel's Bivariate Exponential (GBE)} distribution is based on an random external stress factor and was introduced by \citet{gumbel1960} and further developed by \citet{hougaard1986}.
    \item The \textbf{Marshall Olkin Bivariate Exponential (MOBE)} distribution \citep{marshall1967multivariate} is based on the assumption that the life-time of the main system does not depend on the failure time of the two-components, but it is affected by another common external factor. In this model, random shocks to the system appear as a homogeneous Poisson process.
    \item The \textbf{Marshal Olkin Bivariate Weibull (MOBW)} distribution was also introduced by \cite{marshall1967multivariate}. It is based on the MOBE model but allows for a more flexible shock process which is modeled using a non-homogeneous Poisson process.
\end{itemize}

The model should be carefully selected regarding the failure mechanism of the underlying process and/or by using some test to fit the most appropriate model. Simultaneously the model parameters should be estimated. More details regarding each of these models, including the maximum likelihood estimation equations, can be found in \emph{Appendix} B. The practitioner is of course free to select any other model, our method works for any distribution selected. 

Next, the control limits for our proposed BTBE monitoring method should be derived as explained in Sections \ref{Sec:One-sided-BTBE} and \ref{Sec:Two-sided-BTBE}. For the above discussed bivariate life-time models, we provide the control limits in Table \ref{tab:cls} and the derivations for these limits are provided in the supplementary material.

Note that for the GBE and MOBE models we design an one-sided upper control chart. This is due to a philosophical issue that the exponential distribution has most of its probability mass at the origin. A lower control limit would be used to signal observations that come from the most likely area of the exponential distribution. Therefore, we signal data that are most likely to occur under the in-control scenario and we recommend using only an upper-sided chart for exponential data. If decreases in the expected event time are of interest, we recommend the user to fit a MOBW model or some other models not based on the exponential distribution. As the MOBW model has most of its probability mass towards the expected value of the data (for $\eta>1$). Thus, signaling an observation below the LCL has more meaningful interpretation as those observations are naturally scarce in an in-control dataset. Therefore, we design a two-sided control chart for MOBW model.

Before we can set the control limits, we need to select an appropriate value for the $ATS_0$ and set $\alpha$ according to Equation \eqref{eq:alpha}. 

\begin{table}[h]
\centering
\small
\caption{Control limits for the BTBE chart for various life-time distributions}
\begin{tabular}{lllll}
\hline \hline
GBE &$X_{(1)}$&&$UCL_{(1)}$&$=-C(1,1)^{-\delta}\ln(\alpha)$\\
&&&&$\;\;\; where \; C(1,1)=\left( \frac{1}{\theta_1}^{1/\delta}+\frac{1}{\theta_2}^{1/\delta} \right)$\\
\cline{2-5}
&$X_{(2)}$&$\delta=1$&
$UCL_{(2)}$&$=
\begin{cases}
x_1 - \theta_2\ln(\alpha) \;\;\;\;\;\;\;\;x_1<x_2\\
x_2 - \theta_1\ln(\alpha) \;\;\;\;\;\;\;\;x_1>x_2
\end{cases}$ \\
\cline{3-5}
&&$0<\delta<1$
&$UCL_{(2)}$&$=
\begin{cases}
\left( \left(\frac{1-\delta}{\delta} \theta_2  W_0(G) \right)^{1/\delta} - \left( \frac{\theta_2 }{\theta_1} \; x_1 \right)^{1/\delta} \right)^{\delta} \;\;\;x_1<x_2\\
\left( \left(\frac{1-\delta}{\delta} \theta_1 W_0(G) \right)^{1/\delta} - \left( \frac{\theta_1}{\theta_2} \; x_2 \right)^{1/\delta} \right)^{\delta} \;\;\;x_1>x_2
\end{cases}$ \\
&&&&$\;\;\;W_0(.)$ is the Lambert function$^*$ \\
&&&&$\;\;\;G=\frac{\delta}{1-\delta} x_{(1)} C(1,1)^\delta \left( \alpha \exp \left(-x_{(1)} C(1,1)^\delta \right) \right)^{\frac{-\delta}{1-\delta}}$\\
\hline
MOBE &$X_{(1)}$&&$UCL_{(1)}$&$=-(\lambda_1+\lambda_2+\lambda_{12})^{-1}\ln(\alpha)$\\
\cline{2-5}
&$X_{(2)}$&&$UCL_{(2)}$&$=
\begin{cases}
x_1- (\lambda_2+\lambda_{12})^{-1}\ln(\alpha) \;\;\;\;\;\;\;\;x_1<x_2\\
x_2 - (\lambda_1+\lambda_{12})^{-1}\ln(\alpha) \;\;\;\;\;\;\;\;x_1>x_2
\end{cases}$ \\
\hline
MOBW &$X_{(1)}$&&$LCL_{(1)}$&$=\left(-(\lambda_1+\lambda_2+\lambda_{12})^{-1}\ln(1-\alpha/2)\right)^{1/\eta}$\\
&&&$UCL_{(1)}$&$=\left(-(\lambda_1+\lambda_2+\lambda_{12})^{-1}\ln(\alpha/2)\right)^{1/\eta}$\\
\cline{2-5}
&$X_{(2)}$&&$LCL_{(2)}$&$=
\begin{cases}
\left(x_1^{\eta} - (\lambda_2+\lambda_{12})^{-1}\ln(1-\alpha/2)\right)^{1/\eta} \;\;\;x_1<x_2\\
\left(x_2^{\eta} - (\lambda_1+\lambda_{12})^{-1}\ln(1-\alpha/2)\right)^{1/\eta} \;\;\;x_1>x_2
\end{cases}$ \\
&&&$UCL_{(2)}$&$=
\begin{cases}
\left(x_1^{\eta}- (\lambda_2+\lambda_{12})^{-1}\ln(\alpha/2)\right)^{1/\eta} \;\;\;\;\;\;\;\;x_1<x_2\\
\left(x_2^{\eta} - (\lambda_1+\lambda_{12})^{-1}\ln(\alpha/2)\right)^{1/\eta} \;\;\;\;\;\;\;\;x_1>x_2
\end{cases}$ \\
\hline \hline
\multicolumn{5}{l}{\footnotesize$^*$ The Lambert function $y=W_0(z)$ gives a solution for the equation $y \exp(y) = z$. }\\
\multicolumn{5}{l}{For details see \cite{lambert1758observationes}, \cite{euler1779serie}, and the supplementary material.}
\end{tabular}
\label{tab:cls}
\end{table}

\subsubsection{Phase II}
After obtaining a model, estimated parameters and control limits, the chart is ready to be run. In this section, we explain how to run the BTBE chart. 

For illustration purposes, we generate $5$ artificial data vector from a GBE model with parameters $\theta_1=20,\,\theta_2=15,\,\delta=0.5$. We assume that the in-control parameters are $\theta_1=5,\,\theta_2=15,\,\delta=0.5$. Using $ATS_0=200$ we obtain control limits based on the expressions in Table \ref{tab:cls}. Table \ref{tab:art.example} provides the details. 

The BTBE chart presented in Figure \ref{artf_data_plot}~(a) is a plot of the superimposed $X_{(1)}$ and $X_{(2)}$ process against their corresponding control limits. It is seen that the first signal occurred at the $6^{th}$ observed event and the corresponding digit `2' indicates that the signal is due to second event i.e. $X_{(2)}$. Similarly, three more signals occurred at the $15-17^{th}$ events, where $16^{th}$ signal was due to the second event while other two were due to the first event. 

Figure \ref{artf_data_plot}~(b) shows the data and control limits in xy-plane format, with $X_{1}$ on the x-axis and $X_{2}$ on the y-axis. The signals due to $X_{(1)}$ are shown with `\textcolor{blue}{$\times$}' and the signals due to $X_{(2)}$ with `\textcolor{red}{$\Box$}'.

\begin{table}[ht]
	\centering
	\small
	\begin{tabular}{c rrrr rr}
		\hline \hline
		Sample & $X_1$ & $X_2$ & $X_{(1)}$ & $X_{(2)}$ &$UCL_{(1)}$&	$UCL_{(2)}$\\
		\hline
1&	24&	10&	10&	24& 18.78	&25.64\\
2&	15&	22&	15&	22&	18.78&85.05\\
3&	36&	15&	15&	36&		18.78&31.68\\
4&	11&	8&	8&	11&		18.78&23.02\\
5&	17&	27&	17&	27&		18.78&89.89\\
6&	3&	2&	2&	3&		18.78&12.85\\
7&	2&	1&	1&	2&		18.78&9.73\\
8&	70&	49&	49&	70&	18.78&67.99\\
9&	28&	56&	28&	56&	18.78&113.20\\
10&	4&	2&	2&	4&		18.78&12.85\\
		\hline\hline 
	\end{tabular}
	\caption{An artificial example data set and corresponding control limits}
	\label{tab:art.example}
\end{table}

\begin{figure}[!htb]
	\centering
	\includegraphics[width=1\linewidth]{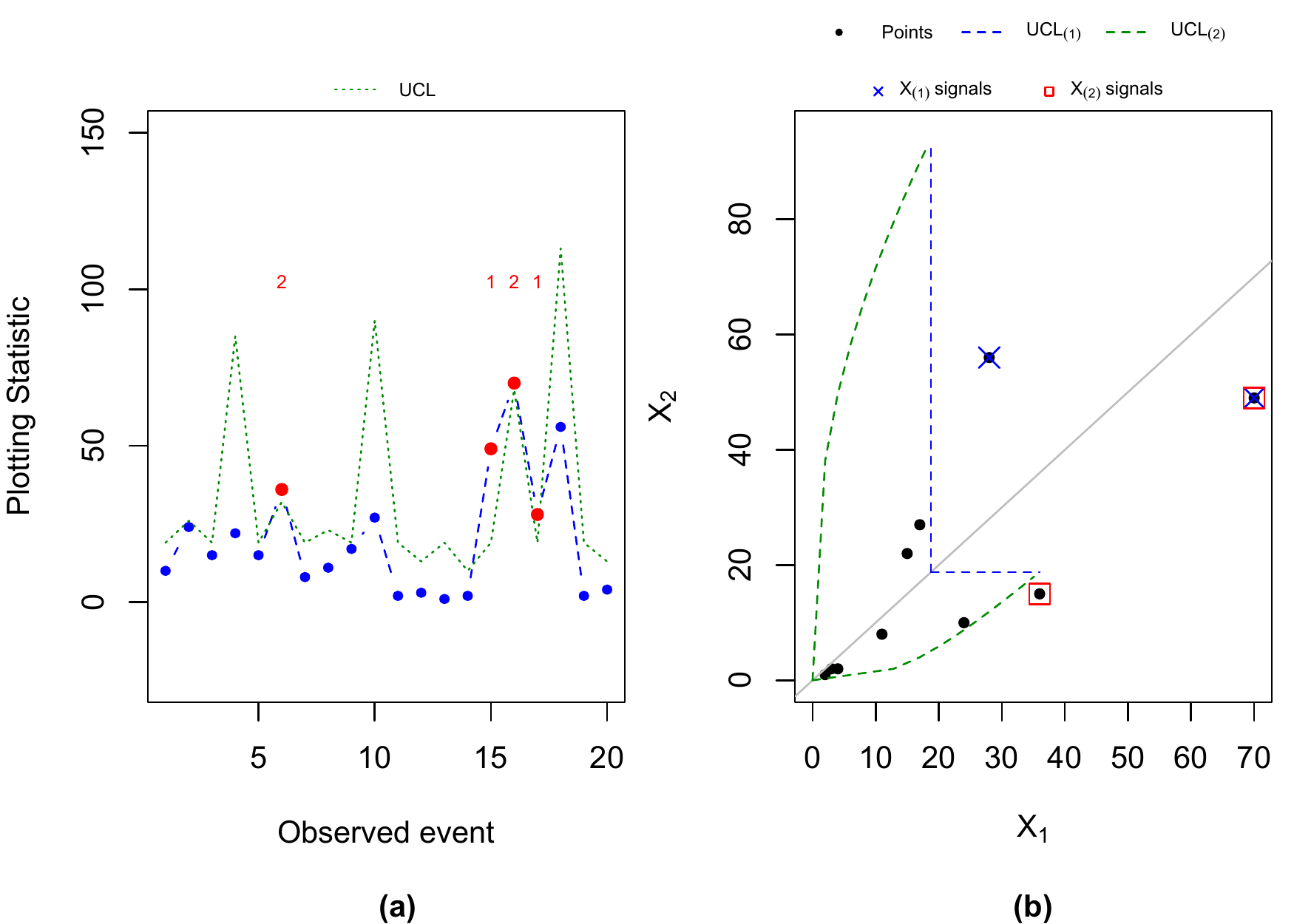}
	\caption{\label{artf_data_plot} Illustration of proposed BTBE chart on artificial data set in form of: (a) Superimposed format and (b) xy-plane format.}
\end{figure}

\section{Theoretical performance}\label{performance_measure}
Given a series of observed bivariate event data $\boldmath{X}^i=(X_1^i,X^i_2)$ for $i=1,2,...,$ the data model is assumed as follows: 
\begin{align} \label{eq:model}
	\begin{split}
		\boldmath{X}^i &\sim f(x|\theta) \;\;\;\; i \leq \tau \\
		\boldmath{X}^i &\sim f(x|\theta^*) \;\;\; i > \tau 
	\end{split}
\end{align}
Given this model, the objective of a control chart is to signal as quickly as possible after $\tau$. Usually, a control chart's performance is quantified using the Average Run Length (ARL) defined as the average number of points after $\tau$, until an out-of-control point is signaled. However, \citet{zwetsloot2020multivariate} highlighted that it is more appropriate to quantify the performance of methods for event data, based on time-to-signal performance metric because the time between plotted statistics varies when monitoring event data. Therefore, we use the average time-to-signal (ATS) as the performance metric, which is defined as the average time elapsed after $\tau$, until an out-of-control point is signaled. ATS can be computed as $ATS = ARL * E[TBE]$, where
\begin{align}  \label{eq:ETBE}
	E[TBE]  = P[X_1=X_2]E[X_{(1)}|X_1= X_2] + 0.5P[X_1\neq X_2] E[X_{(2)}|X_1\neq X_2] 
\end{align}
With probability $P[X_1=X_2]$ we can observe only one event with $E[X_{(1)}|X_1=X_2]$ being the expected time to this event. With probability $P[X_1 \neq X_2]$, we can either observe $X_{(1)}$ or we observe $X_{(2)}$ since both occur with probability $0.5$ and they are multiplied by the event time for $X_{(1)}$ and $X_{(2)}$, respectively. Thus, $$P[X_1\neq X_2] E[X_{(2)}|X_1\neq X_2] = P[X_1\neq X_2] E[X_{(1)}|X_1\neq X_2] + P[X_1\neq X_2] E[X_{(2)}-X_{(1)}|X_1\neq X_2].$$  

Since we used a superimposed data stream, not all data points are independent hence, the computation of $ARL$ and consequently, the $ATS$ is subtle. Taking this into account, we derive the analytical expression for the $ATS$ in Theorem \ref{thm:ats}. Proof of Theorem \ref{thm:ats} can be found in \emph{Appendix} A.

\begin{theorem}\label{thm:ats}
	Assume $X=(X_1,X_2) \sim f(x|\theta^*)$ as in model \ref{eq:model}. Let $UCL_{(1)}$ and $UCL_{(2)}$ be defined as in sections \ref{Sec:One-sided-BTBE} and \ref{Sec:Two-sided-BTBE} and obtained from an in-control data set $X\sim f(x|\theta)$. The proposed BTBE chart will have an ATS performance equal to: 
	\begin{align}\label{eq:an-ats}
		ATS = E_*[TBE]\frac{1 + P_*[NS_1,\neq ]}{P_*[S_1]+P_*[NS_1,S_2, \neq ]} 
	\end{align}
	where $P_*$ and $E_*$ denote the probability and expectation over $f(x|\theta^*$), respectively and $\neq$ is the abbreviation for $X_{1}\neq X_{2}$. When evaluating an one-sided chart, the probabilities are defined as:
	\begin{align} \label{eq:prob1}
		\begin{split}
		P_*[NS_1,\neq ] &= P_*[X_{(1)}\leq UCL_{(1)}, X_1 \neq X_2] \\
		P_*[S_1] &= P_*[X_{(1)}> UCL_{(1)}] \\
		P_*[NS_1,S_2,\neq ] &=  P_*[X_{(1)} \leq UCL_{(1)}, X_{(2)}>UCL_{(2)},X_1\neq X_2] 
		\end{split}
	\end{align}
	and for a two-sided chart they are defined as
	\begin{align} \label{eq:prob2}
		\begin{split}
		P_*[NS_1,\neq ] =& P_*[LCL_{(1)} < X_{(1)}\leq UCL_{(1)}, X_1 \neq X_2] \\
		P_*[S_1] =& P_*[X_{(1)} \leq LCL_{(1)}] + P_*[X_{(1)}> UCL_{(1)}] \\
		P_*[NS_1,S_2,\neq ] =&  P_*[LCL_{(1)} <  X_{(1)} \leq UCL_{(1)}, X_{(2)}\leq LCL_{(2)} ,X_1\neq X_2] \\
										 &+ P_*[LCL_{(1)} <  X_{(1)} \leq UCL_{(1)}, X_{(2)}>UCL_{(2)},X_1\neq X_2] 
		\end{split}
	\end{align}
\end{theorem}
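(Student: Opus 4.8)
The plan is to work at the level of a single bivariate observation (one ``cycle'' of the superimposed stream) and then chain cycles together by a renewal / first-step argument. Since the observations $X^i$ are i.i.d.\ from $f(x\mid\theta^*)$ after the change, distinct cycles are independent repetitions; the only dependence to handle is \emph{within} a cycle, where the second plotted point $X_{(2)}$ is observed only after $X_{(1)}$ and is conditioned on it through Theorem~\ref{thm:F_x2}. First I would partition the outcome of one cycle into the mutually exclusive events that jointly govern the number of plotted points and the occurrence of a signal: $S_1$ (the first point signals); $NS_1,=$ (no first-point signal and $X_1=X_2$, so no second point is ever plotted); $NS_1,\neq,S_2$ (no first-point signal, a second point exists, and it signals); and $NS_1,\neq,NS_2$ (no signal at all, two points plotted). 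The per-cycle signal probability is then $p=P_*[S_1]+P_*[NS_1,\neq,S_2]$, which is exactly the denominator appearing in~\eqref{eq:an-ats}.

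Next I would compute the expected number of plotted points until the first signal --- the ARL --- by a first-step analysis. Writing $A$ for this expectation and decomposing on the outcome of the current cycle gives
\begin{equation*}
A = 1 + P_*[NS_1,\neq] + A\bigl(P_*[NS_1,=]+P_*[NS_1,\neq,NS_2]\bigr),
\end{equation*}
because one point is always plotted, a second point is plotted precisely on the event $NS_1,\neq$, and the chart restarts (contributing a fresh $A$ in expectation) exactly on the two ``continue'' outcomes $NS_1,=$ and $NS_1,\neq,NS_2$. Since the continuation probability collapses to $1-p$, solving for $A$ yields
\begin{equation*}
A=\frac{1+P_*[NS_1,\neq]}{P_*[S_1]+P_*[NS_1,\neq,S_2]},
\end{equation*}
the ratio in~\eqref{eq:an-ats}. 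I would then obtain the ATS by multiplying this ARL by the mean time between consecutive plotted points, invoking the relation $ATS=ARL\cdot E_*[TBE]$ together with the expression for $E_*[TBE]$ in~\eqref{eq:ETBE}, both established earlier in this section.

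Finally, I would observe that the argument is structurally identical for the one-sided and two-sided charts: only the meaning of the signalling events $S_1$ and $S_2$ changes (an upper exceedance alone versus exceedance of either limit), which is precisely what distinguishes~\eqref{eq:prob1} from~\eqref{eq:prob2}; the recursion and its solution are unchanged. The step I expect to be the main obstacle is the bookkeeping in this first-step recursion. One must correctly encode that the second point is \emph{not} plotted when $X_1=X_2$ (so the ``$=$'' outcome contributes a single point and can never itself produce a signal), and that a cycle ending in a signal contributes a different point count than a continuing cycle. Making the coefficient of $A$ collapse to the clean continuation probability $1-p$ and the constant term collapse to $1+P_*[NS_1,\neq]$ is the delicate part; once that algebra is verified the stated ATS follows immediately.
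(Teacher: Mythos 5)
Your proof is correct, but it takes a genuinely different --- and considerably shorter --- route than the paper's. The paper computes the ARL from its definition $ARL=\sum_{i} i\,P[\text{signal at event } i]$: it splits according to whether the signal falls on a first or a second event, expands ``no signal up to event $i-1$'' into all possible interleavings of one-point blocks $[NS_1,=]$ and two-point blocks $[NS_1,NS_2,\neq]$, and then collapses the resulting family of infinite series using the generalized geometric-series identities of Definition~\ref{def:gs}. Your renewal/first-step argument replaces all of that combinatorial bookkeeping with a single linear equation: conditioning on the outcome of one i.i.d.\ cycle, the expected number of points contributed by that cycle is $1+P_*[NS_1,\neq]$ (correctly counting that a signalling first point suppresses the second, and that $X_1=X_2$ yields only one plotted point), the chart restarts with continuation probability $1-p$ where $p=P_*[S_1]+P_*[NS_1,S_2,\neq]$, and solving for $A$ gives the ratio in~\eqref{eq:an-ats} at once; multiplying by $E_*[TBE]$ is then the same final step the paper takes. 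Your recursion is valid because distinct cycles are independent and the within-cycle dependence of $X_{(2)}$ on $X_{(1)}$ (through the dynamic limit $UCL_{(2)}$) is entirely absorbed into the per-cycle joint probabilities --- the same device the paper relies on. What your approach buys: brevity, no need for the series machinery, a transparent renewal-reward reading of the formula (expected plotted points per cycle divided by per-cycle signal probability), and easier extension beyond two components. What the paper's summation buys: convergence is established explicitly by summing the series, whereas your step ``solve for $A$'' tacitly assumes $A<\infty$; to make your argument airtight, add one line observing that the number of plotted points is stochastically dominated by twice a geometric number of cycles, so $A$ is finite whenever $p>0$, which legitimizes treating $A$ as a finite unknown in the recursion.
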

\noindent Note that when $P[X_1=X_2]=0$ (as is the case in the GBE model), the expression for the $ATS$ in Equation~\eqref{eq:an-ats} simplifies to 
\begin{align}
	ATS = E_*[TBE]\frac{2 - P_*[S_1]}{P_*[S_1]+P_*[NS_1,S_2]}.
\end{align}
This is so because if  $P_*[X_1=X_2]=0$, it follows that $$P_*[NS_1,X_1\neq X_2]=P_*[NS_1]= 1- P_*[S_1].$$ Next, using Theorem \ref{thm:ats}, we derived analytical expressions for the $ATS$ performance when our data follows the three selected life-time distributions. Detailed derivations to obtain the results in Corollary \ref{cor:gbe}-\ref{cor:mobw} can be found in the supplementary material.

\begin{corollary}\label{cor:gbe}
	Assume $X \sim f(x; \theta)$ as in model \ref{eq:model}, where $f()$ is the GBE distribution function with $\theta=(\theta_1,\theta_2,\delta)$ for in-control data, and $\theta=(\theta^*_1,\theta^*_2,\delta^*)$ for out-of-control data. When $\delta=1$, the ATS performance of our one-sided proposed method, is equal to 
	\begin{align}
		ATS =  
		\frac{2 - \alpha^{\frac{C^*_{11}}{C_{11}}}}
		{1- \left(1- \alpha^{\frac{C^*_{11}}{C_{11}}}\right) 
			\left( \frac{\theta_1^{*-1}}{C^*_{11}} 
			\left( 1- \alpha^{\frac{\theta_2}{\theta_2^*}}\right)	+\frac{\theta_2^{*-1}}{C^*_{11}}      
			\left( 1- \alpha^{\frac{\theta_1}{\theta_1^*}} \right) \right)      
		}E_*[TBE]
	\end{align}	
	where 
	\begin{align*}
		E_*[TBE] = 0.5(\theta_1^*+\theta_2^*-C(1,1)^{-\delta})
	\end{align*}
\end{corollary}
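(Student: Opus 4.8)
The plan is to specialize the general one-sided ATS formula~\eqref{eq:an-ats} of Theorem~\ref{thm:ats} to the GBE distribution with $\delta=1$, where the two margins become independent exponentials with means $\theta_1,\theta_2$ (in control) and $\theta_1^*,\theta_2^*$ (out of control). Since the GBE density is continuous, $P_*[X_1=X_2]=0$, so I would first invoke the simplified form noted immediately after Theorem~\ref{thm:ats},
$$ATS = E_*[TBE]\,\frac{2-P_*[S_1]}{P_*[S_1]+P_*[NS_1,S_2]},$$
which uses $P_*[NS_1,\neq]=P_*[NS_1]=1-P_*[S_1]$. It then remains to evaluate $E_*[TBE]$, $P_*[S_1]$ and $P_*[NS_1,S_2]$ under $f(x|\theta^*)$, substituting the control limits from Table~\ref{tab:cls}, which are fixed from the in-control fit and so are expressed through $\theta_1,\theta_2$ and $C_{11}=\theta_1^{-1}+\theta_2^{-1}$.

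For $E_*[TBE]$, with $P_*[X_1=X_2]=0$ equation~\eqref{eq:ETBE} collapses to $E_*[TBE]=\tfrac12 E_*[X_{(2)}]$. I would write $E_*[\max(X_1,X_2)]=E_*[X_1]+E_*[X_2]-E_*[\min(X_1,X_2)]$ and use that at $\delta=1$ the minimum of two independent exponentials is exponential with rate $C^*_{11}=\theta_1^{*-1}+\theta_2^{*-1}$, so $E_*[\min]=1/C^*_{11}$. This yields $E_*[TBE]=\tfrac12(\theta_1^*+\theta_2^*-C^{*-1}_{11})$, matching the stated expression, where the $C(1,1)^{-\delta}$ appearing there is the out-of-control minimum mean.

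For $P_*[S_1]$, I would note $P_*[S_1]=P_*[X_{(1)}>UCL_{(1)}]=S_*(UCL_{(1)},UCL_{(1)})=\exp(-UCL_{(1)}C^*_{11})$ and then substitute $UCL_{(1)}=-C_{11}^{-1}\ln\alpha$ from Table~\ref{tab:cls}, obtaining $P_*[S_1]=\alpha^{C^*_{11}/C_{11}}$; this already delivers the numerator $2-\alpha^{C^*_{11}/C_{11}}$. The heart of the argument is $P_*[NS_1,S_2]=P_*[X_{(1)}\le UCL_{(1)},X_{(2)}>UCL_{(2)}]$, which I would split over the events $\{X_1<X_2\}$ and $\{X_1>X_2\}$ because $UCL_{(2)}$ is a different, data-dependent function on each. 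On $\{X_1<X_2\}$, with $X_{(1)}=X_1$, $X_{(2)}=X_2$ and $UCL_{(2)}=x_1-\theta_2\ln\alpha$, I would integrate the joint density over $x_1\in(0,UCL_{(1)}]$ and $x_2>x_1-\theta_2\ln\alpha$; since $-\theta_2\ln\alpha>0$ the constraint already forces $x_2>x_1$, so the region indicator is automatic. Exploiting independence, the survival factor $P_*[X_2>x_1-\theta_2\ln\alpha]=\alpha^{\theta_2/\theta_2^*}e^{-x_1/\theta_2^*}$ combines with the density of $X_1$ into an integrand proportional to $e^{-x_1 C^*_{11}}$, and the remaining $x_1$-integral is elementary, giving $(1-\alpha^{C^*_{11}/C_{11}})\,\tfrac{\theta_1^{*-1}}{C^*_{11}}\,\alpha^{\theta_2/\theta_2^*}$. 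The region $\{X_1>X_2\}$ is symmetric under $1\leftrightarrow 2$ and yields the analogous term with $\theta_2^{*-1}$ and $\alpha^{\theta_1/\theta_1^*}$.

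The final step is purely algebraic: I would add $P_*[S_1]+P_*[NS_1,S_2]$, write $a=\theta_1^{*-1}/C^*_{11}$ and $b=\theta_2^{*-1}/C^*_{11}$, and use the identity $a+b=1$ to rearrange the denominator into $1-(1-\alpha^{C^*_{11}/C_{11}})\big(a(1-\alpha^{\theta_2/\theta_2^*})+b(1-\alpha^{\theta_1/\theta_1^*})\big)$, which is exactly the stated form. I expect the main obstacle to be the third step: correctly setting up the two-region integral for $P_*[NS_1,S_2]$ with the realized first-event time appearing inside the dynamic limit $UCL_{(2)}$, and recognizing that at $\delta=1$ the conditional survival factor decouples so the integral reduces to elementary exponentials; the rest is bookkeeping together with the $a+b=1$ identity.
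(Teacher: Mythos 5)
Your proposal is correct and follows essentially the route the paper takes: specializing Theorem~\ref{thm:ats} (in its simplified form for $P_*[X_1=X_2]=0$) to the $\delta=1$ GBE case, where the margins are independent exponentials, computing $P_*[S_1]=\alpha^{C^*_{11}/C_{11}}$ and the two-region integral for $P_*[NS_1,S_2]$ against the dynamic limits of Table~\ref{tab:cls}, and recovering the stated denominator via the identity $\theta_1^{*-1}/C^*_{11}+\theta_2^{*-1}/C^*_{11}=1$. All intermediate quantities you compute (the factorization $\alpha^{\theta_2/\theta_2^*}e^{-x_1/\theta_2^*}$, the elementary $x_1$-integral yielding $(1-\alpha^{C^*_{11}/C_{11}})\theta_1^{*-1}/C^*_{11}$, and $E_*[TBE]=\tfrac12(\theta_1^*+\theta_2^*-C^{*-1}_{11})$) check out.
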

\noindent Note that for $\delta<1$ the analytical expression for the $ATS$ is too difficult to derive, thus we use simulation to obtain the ATS.

\begin{corollary}\label{cor:mobe}
	Assume $X \sim f(x; \theta)$ as in model \ref{eq:model}, where $f()$ is the MOBE distribution function, with $\theta = (\lambda_1,\lambda_2,\lambda_{12})$ for in-control data, $\theta = (\lambda^*_1,\lambda^*_2,\lambda^*_{12})$ for out-of-control data, and $\Lambda=\lambda_1+\lambda+\lambda_{12}$. The ATS performance of our one-sided proposed method, is equal to 
	\begin{align}
		ATS = \frac{1+\frac{\lambda^{*}_{1} + \lambda^{*}_{2}}{\Lambda^*}\bigg(1-\alpha^{\frac{\Lambda^{*}}{\Lambda}}\bigg)}{\alpha^{\frac{\Lambda^{*}}{\Lambda}}+\bigg(1-\alpha^{\frac{\Lambda^{*}}{\Lambda}}\bigg)\bigg(\frac{\lambda^{*}_{1}}{\Lambda^{*}}\alpha^{\frac{\lambda^{*}_{2} + \lambda^{*}_{12}}{\lambda_{2} + \lambda_{12}}}+ \frac{\lambda^{*}_{2}}{\Lambda^{*}}\alpha^{\frac{\lambda^{*}_{1} + \lambda^{*}_{12}}{\lambda_{1} + \lambda_{12}}}\bigg)}E_*[TBE] 
\end{align}
	where 
\begin{align*}
	E_*[TBE] = 0.5\left(\frac{\lambda^*_1+\lambda^*_{2}}{{\Lambda^*}^{2}}+\frac{\lambda^*_{2}}{\Lambda^*(\lambda^*_{1}+\lambda^*_{12})}+\frac{\lambda^*_{1}}{\Lambda^*(\lambda^*_{2}+\lambda^*_{12})}\right)+\frac{\lambda^*_{12}}{{\Lambda^*}^{2}}.
\end{align*}
\end{corollary}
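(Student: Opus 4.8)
\subsection*{Proof proposal for Corollary \ref{cor:mobe}}

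The plan is to specialise the general ATS formula of Theorem \ref{thm:ats}, Equation \eqref{eq:an-ats}, to the MOBE distribution by evaluating the three one-sided probabilities in Equation \eqref{eq:prob1} and the quantity $E_*[TBE]$ of Equation \eqref{eq:ETBE} explicitly. Throughout I would exploit the standard Marshall--Olkin shock construction $X_1 = \min(Z_1,Z_{12})$, $X_2 = \min(Z_2,Z_{12})$ with independent $Z_1 \sim \text{Exp}(\lambda_1)$, $Z_2 \sim \text{Exp}(\lambda_2)$, $Z_{12} \sim \text{Exp}(\lambda_{12})$, which gives the survival function $S(x_1,x_2) = \exp(-\lambda_1 x_1 - \lambda_2 x_2 - \lambda_{12}\max(x_1,x_2))$, the diagonal (tie) mass $P[X_1=X_2]=\lambda_{12}/\Lambda$ with $\Lambda=\lambda_1+\lambda_2+\lambda_{12}$, and, off the diagonal, the absolutely continuous density $\lambda_1(\lambda_2+\lambda_{12})\exp(-\lambda_1 x_1 - (\lambda_2+\lambda_{12})x_2)$ on $\{x_1<x_2\}$ plus its mirror image on $\{x_1>x_2\}$. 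The limits $UCL_{(1)}$ and $UCL_{(2)}$ are the MOBE entries of Table \ref{tab:cls}, computed from the in-control $\lambda$'s, while every probability and expectation is taken under the out-of-control (starred) parameters, so keeping track of which parameter set enters where is essential.

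First I would dispatch the two easy probabilities. Since $P[X_{(1)}>t]=S(t,t)=\exp(-\Lambda^* t)$ under the out-of-control law, substituting $t=UCL_{(1)}=-\Lambda^{-1}\ln\alpha$ gives $P_*[S_1]=\alpha^{\Lambda^*/\Lambda}$ at once. For $P_*[NS_1,\neq]$ I would write $P_*[X_{(1)}\le t, X_1\neq X_2] = P_*[X_{(1)}\le t] - P_*[X_{(1)}\le t, X_1 = X_2]$ and integrate the diagonal intensity $\lambda_{12}^*\exp(-\Lambda^* s)$ up to $t$; the common factor $1-\exp(-\Lambda^* t)$ then factors out and leaves $\tfrac{\lambda_1^*+\lambda_2^*}{\Lambda^*}(1-\alpha^{\Lambda^*/\Lambda})$, which is exactly the numerator correction.

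The crux is $P_*[NS_1,S_2,\neq]$, which I would split by direction into the contributions from $\{X_1<X_2\}$ and $\{X_1>X_2\}$, since $UCL_{(2)}$ takes a different form on each. On $\{X_1<X_2\}$ one has $X_{(1)}=X_1$, $X_{(2)}=X_2$, and $UCL_{(2)} = x_1 - (\lambda_2+\lambda_{12})^{-1}\ln\alpha$; because $-\ln\alpha>0$ this limit exceeds $x_1$, so the event $\{X_2>UCL_{(2)}\}$ already forces $X_1<X_2$ and the wedge (and $\neq$) constraint becomes automatic. Integrating the absolutely continuous density over $\{x_1\le UCL_{(1)},\, x_2 > x_1 + d\}$ with $d=-(\lambda_2+\lambda_{12})^{-1}\ln\alpha$ separates into an $x_1$-integration factor $\tfrac{\lambda_1^*}{\Lambda^*}(1-\alpha^{\Lambda^*/\Lambda})$ and an $x_2$-tail factor $\alpha^{(\lambda_2^*+\lambda_{12}^*)/(\lambda_2+\lambda_{12})}$; the symmetric region supplies the mirror term, and their sum is the second denominator term.

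Finally I would evaluate $E_*[TBE]$ from Equation \eqref{eq:ETBE} by treating the tie part and the off-diagonal part separately: conditional on a tie, $X_{(1)}=Z_{12}$ is $\text{Exp}(\Lambda^*)$, giving the contribution $\lambda_{12}^*/(\Lambda^*)^2$; for the off-diagonal part I would integrate $X_{(2)}=\max(X_1,X_2)$ against the two wedge densities, each a one-dimensional Gamma-type integral, and collect terms to recover the $0.5(\cdot)$ bracket. Assembling the numerator $1+P_*[NS_1,\neq]$, the denominator $P_*[S_1]+P_*[NS_1,S_2,\neq]$, and $E_*[TBE]$ into Equation \eqref{eq:an-ats} then yields the stated result. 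I expect the main obstacle to be the careful treatment of the $UCL_{(2)}$ dichotomy together with the singular diagonal component of the MOBE law: one must keep the in-control limits and out-of-control integrands consistent, verify that the wedge and $\{X_{(2)}>UCL_{(2)}\}$ constraints interact exactly as claimed, and avoid double-counting the diagonal mass when separating the $X_1=X_2$ and $X_1\neq X_2$ events.
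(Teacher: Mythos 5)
Your proposal is correct and follows essentially the same route as the paper: it specializes the general ATS expression of Theorem \ref{thm:ats} to the MOBE model by computing $P_*[S_1]=\alpha^{\Lambda^*/\Lambda}$, $P_*[NS_1,\neq]=\frac{\lambda_1^*+\lambda_2^*}{\Lambda^*}(1-\alpha^{\Lambda^*/\Lambda})$, and the two wedge contributions to $P_*[NS_1,S_2,\neq]$ under the out-of-control law with in-control limits from Table \ref{tab:cls}, then assembles $E_*[TBE]$ from Equation \eqref{eq:ETBE} using the tie mass $\lambda_{12}^*/\Lambda^*$ and the off-diagonal densities. All intermediate quantities check out against the stated formula, including the key observation that $UCL_{(2)}>x_{(1)}$ makes the wedge constraint automatic in the $S_2$ term.
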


\begin{corollary} \label{cor:mobw}
	Assume $X \sim f(x; \theta)$ as in model \ref{eq:model}, where $f()$ is the MOBW distribution function, with $\theta = (\lambda_1,\lambda_2,\lambda_{12},\eta)$ for in-control data, $\theta = (\lambda^*_1,\lambda^*_2,\lambda^*_{12},\eta)$ for out-of-control data, and $\Lambda=\lambda_1+\lambda+\lambda_{12}$. The ATS performance of our two-sided proposed method, is equal to 
	{\scriptsize
	\begin{align} 
		ATS =
		\frac{ \left[1 + \frac{\lambda^{*}_{1}+\lambda^{*}_{2}}{\Lambda^{*}} \left( 1 - \alpha^* \right) \right]E_*[TBE]}{\alpha^*+ (1-\alpha^*) \left[ \frac{\lambda_1^*}{\Lambda^*}\left(1-(1-\alpha)^\frac{\lambda_{2}^{*}+\lambda_{12}^{*}}{\lambda_{2}+\lambda_{12}} + \alpha^\frac{\lambda_{2}^{*}+\lambda_{12}^{*}}{\lambda_{2}+\lambda_{12}} \right) + \frac{\lambda_2^*}{\Lambda^*}\left(1-(1-\alpha)^\frac{\lambda_{1}^{*}+\lambda_{12}^{*}}{\lambda_{1}+\lambda_{12}} + \alpha^\frac{\lambda_{1}^{*}+\lambda_{12}^{*}}{\lambda_{1}+\lambda_{12}} \right)\right]} 
	\end{align}}\par\noindent
where $\alpha^*_L = 1- (1-\alpha)^{\frac{\Lambda^{*}}{\Lambda}}$, $\alpha^*_U = \alpha^{\frac{\Lambda^{*}}{\Lambda}}$, and $\alpha^* = \alpha^*_L + \alpha^*_U $ and
{\scriptsize
	\begin{align*}
	E_*[TBE] = 0.5\Gamma\left(1+\frac{1}{\eta}\right) \left(\frac{1}{(\lambda^*_{2}+\lambda^*_{12})^{1/\eta}}-\frac{\lambda^*_{2}+\lambda^*_{12}}{{\Lambda^*}^{1+1/\eta}}+\frac{1}{(\lambda^*_{1}+\lambda^*_{12})^{1/\eta}}-\frac{\lambda^*_{1}+\lambda^*_{12}}{{\Lambda^*}^{1+1/\eta}}+2\frac{\lambda^*_{12}}{{\Lambda^*}^{1+1/\eta}}\right)
\end{align*}\par}
\end{corollary}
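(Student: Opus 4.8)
The plan is to specialize the general two-sided ATS formula of Theorem~\ref{thm:ats} to the MOBW distribution by evaluating the three probabilities in Equation~\eqref{eq:prob2}, together with $E_*[TBE]$ from Equation~\eqref{eq:ETBE}. The whole argument rests on the latent-variable (shock) representation of Marshall--Olkin type models: write $X_1 = \min(Z_1, Z_{12})$ and $X_2 = \min(Z_2, Z_{12})$, where $Z_1, Z_2, Z_{12}$ are independent with $Z_j^\eta$ exponential of rate $\lambda_j^*$ under $f(x\mid\theta^*)$, and set $\Lambda^* = \lambda_1^*+\lambda_2^*+\lambda_{12}^*$. Throughout I write $\alpha$ for the per-side false-alarm rate (the $\alpha/2$ of Table~\ref{tab:cls}), matching the notation of the corollary. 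Two structural facts that this representation delivers are what make the general expression collapse to closed form: (i) the value of the first event $X_{(1)} = \min(X_1,X_2)$ and the identity of which subprocess fired first are independent, and (ii) conditional on $X_{(1)}$, the signalling probability of the second event does not depend on the realised value of $X_{(1)}$.

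First I would handle $X_{(1)}$. Since $X_{(1)}^\eta = \min(Z_1^\eta, Z_2^\eta, Z_{12}^\eta)$ is exponential with rate $\Lambda^*$, we have $P_*[X_{(1)}>t] = \exp(-\Lambda^* t^\eta)$. Substituting the in-control limits $LCL_{(1)}$ and $UCL_{(1)}$ from Table~\ref{tab:cls} immediately yields $P_*[X_{(1)}>UCL_{(1)}] = \alpha_U^* = \alpha^{\Lambda^*/\Lambda}$ and $P_*[X_{(1)}\le LCL_{(1)}] = \alpha_L^* = 1-(1-\alpha)^{\Lambda^*/\Lambda}$, so that $P_*[S_1] = \alpha^*$ and the non-signalling probability is $1-\alpha^*$. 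Because the argmin identity is independent of $X_{(1)}$ (the standard competing-risks property of independent exponentials, carried through the increasing map $t\mapsto t^\eta$), and since $P_*[X_1<X_2]=\lambda_1^*/\Lambda^*$, $P_*[X_1>X_2]=\lambda_2^*/\Lambda^*$, $P_*[X_1=X_2]=\lambda_{12}^*/\Lambda^*$, I obtain $P_*[NS_1,\neq] = \tfrac{\lambda_1^*+\lambda_2^*}{\Lambda^*}(1-\alpha^*)$, which is exactly the numerator correction.

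Next I would treat the second event via the conditional survival function of Theorem~\ref{thm:F_x2}. For the MOBW survival function $S(x_1,x_2) = \exp(-\lambda_1 x_1^\eta - \lambda_2 x_2^\eta - \lambda_{12}\max(x_1^\eta,x_2^\eta))$, differentiating and forming the ratio $S_1(x_1,x_2)/S_1(x_1,x_1)$ on the branch $x_1<x_2$ gives the clean residual form $S_{X_{(2)}\mid X_{(1)}}(x_{(2)}\mid x_{(1)}) = \exp\{-(\lambda_2^*+\lambda_{12}^*)(x_{(2)}^\eta - x_{(1)}^\eta)\}$, and symmetrically with $\lambda_1^*+\lambda_{12}^*$ on the branch $x_1>x_2$. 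Substituting the second-stage limits $LCL_{(2)}, UCL_{(2)}$ from Table~\ref{tab:cls} makes the exponents telescope, so the conditional two-sided signal probability becomes the constant $c_1 = 1-(1-\alpha)^{(\lambda_2^*+\lambda_{12}^*)/(\lambda_2+\lambda_{12})} + \alpha^{(\lambda_2^*+\lambda_{12}^*)/(\lambda_2+\lambda_{12})}$ (and $c_2$ with indices swapped), crucially free of $x_{(1)}$ --- this is fact (ii). The constancy lets me pull $c_1,c_2$ outside the expectation over $X_{(1)}$ and combine with the previous step to get $P_*[NS_1,S_2,\neq] = (1-\alpha^*)\big(\tfrac{\lambda_1^*}{\Lambda^*}c_1 + \tfrac{\lambda_2^*}{\Lambda^*}c_2\big)$, the denominator correction. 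Plugging $P_*[S_1]$, $P_*[NS_1,\neq]$ and $P_*[NS_1,S_2,\neq]$ into Equation~\eqref{eq:an-ats} reproduces the stated ratio. Finally I would compute $E_*[TBE]$ from Equation~\eqref{eq:ETBE}: this needs the moments $E_*[X_{(1)}]$, $E_*[X_{(2)}]$ and $E_*[X_{(2)}\mid X_1\ne X_2]$, each an expectation of a rate-$\lambda$ Weibull variate contributing a factor $\Gamma(1+1/\eta)$ times the appropriate rate raised to $-1/\eta$; assembling them over the cases $x_1<x_2$, $x_1>x_2$, $x_1=x_2$ yields the displayed $E_*[TBE]$.

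I expect the main obstacle to be establishing fact (ii) cleanly --- namely verifying that, after substituting the data-dependent limits $LCL_{(2)}^i, UCL_{(2)}^i$, the conditional signal probability is genuinely independent of the observed $x_{(1)}^i$, so that the dependence introduced by superimposing the two streams does not obstruct the factorisation. Once this memoryless-type telescoping is in place, together with the independence of the argmin from $X_{(1)}$, the algebra is essentially bookkeeping. The $E_*[TBE]$ computation is longer but routine given the shock representation and the $\Gamma(1+1/\eta)$ moment identity.
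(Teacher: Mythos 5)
Your proposal is correct and follows essentially the same route as the paper: specialize Theorem~\ref{thm:ats} by evaluating the three probabilities in Equation~\eqref{eq:prob2} with the Table~\ref{tab:cls} limits, using the competing-risks facts of the Marshall--Olkin structure (independence of $X_{(1)}$ from the identity of the first-failing subprocess, with $P_*[X_1<X_2]=\lambda_1^*/\Lambda^*$ etc.) and the memoryless-type telescoping of the conditional survival ratio $S_1(x_1,x_2)/S_1(x_1,x_1)=\exp\{-(\lambda_2^*+\lambda_{12}^*)(x_2^\eta-x_1^\eta)\}$, then assembling $E_*[TBE]$ from the moment expressions of Equation~\eqref{eq:ETBE}. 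Your reading of $\alpha$ as the per-side rate (the $\alpha/2$ of Table~\ref{tab:cls}) is indeed the interpretation under which the stated formula is exact, and your identification of fact (ii) --- that the second-stage signal probability is free of $x_{(1)}$ --- as the crux matches what makes the paper's closed form possible.
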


\section{Performance of proposed method}\label{Sec:sim.study}

In this section, we evaluate the performance of the proposed BTBE chart. First, we describe our experiments followed by a performance evaluation.

\subsection{Synthetic data experiments}
We consider data from either of the three life-time distributions introduced in Section \ref{Sec:BTBE_charts}.  We select four in-control models to evaluate the following scenarios. 
\begin{itemize}
	\item Scenario 1: equal expectations for both event times; $E[X_1] = E[X_2] = 5$, with $X_1$ and $X_2$ modeled to be independent.
	\item Scenario 2: equal expectations for both event times; $E[X_1] = E[X_2] = 5$, with $X_1$ and $X_2$ modeled to be dependent.
	\item Scenario 3: unequal expectations for the event times; $E[X_1] = 5$ and $E[X_2] = 15$, with $X_1$ and $X_2$ modeled to be independent.
	\item Scenario 4: unequal expectations for the event times; $E[X_1] = 5$ and $E[X_2] = 15$, with $X_1$ and $X_2$ modeled to be dependent.
\end{itemize}

For the performance assessment of the BTBE chart, four different types of shifts are considered for each of the in-control scenarios:
\begin{itemize}
    \item Shift type I1: Increase in only $E[X_1]$ by 50 and 100 percent. 
    \item Shift type I2: Increase in both $E[X_1]$ and $E[X_2]$ by 50 and 100 percent.
    \item Shift type D1: Decrease shift in $E[X_1]$ to 50 percent. 
    \item Shift type D2: Decrease shift in both $E[X_1]$ and $E[X_2]$ to 50 percent. 
\end{itemize}

\par\noindent
For each of the three distributions, we have selected parameters ($\lambda_1, \lambda_2,\lambda_{12}$ for MOBE, $\lambda_1, \lambda_2,\lambda_{12},\eta$ for MOBW and $\theta_1,\theta_2,\delta$ for GBE) such that the above mentioned expectations are met under in-control and out-of-control scenarios. Table \ref{tab:par} in \emph{Appendix} B provides the parameter values. 

Note that dependence has a careful interpretation when the data are MOBE or MOBW distributed. So dependence is related to the probability that $X_1=X_2$. Therefore to model independence in scenarios 1 and 3 we set $P[X_1=X_2] = 0  $ and to model dependence in scenarios 2 and 4 we set $P[X_1=X_2]= 0.1$. 

\subsection{Performance evaluation of BTBE chart}
Table \ref{tab:ats} shows the ATS performance of our proposed BTBE chart for the selected experiments and the three distributions. Note that we implement the one-sided upper chart for the GBE and MOBE distributed data and a two-sided chart for the MOBW distributed data. Results in Table \ref{tab:ats} where obtained using Corollary \ref{cor:gbe}-\ref{cor:mobw}, except for the GBE results in scenarios 2 and 4 (the dependent scenarios ($\delta<1$)). All code for replicating the results can be found on \url{https://github.com/tmahmood5/Codes-BTBE-Monitoring-Method}.  

\begin{table}[ht]
	\centering
	\small
	\caption{ATS values for our BTBE chart under various data distributions and scenarios}
	\begin{tabular}{llll rrr}
		\hline \hline
		&&&&\multicolumn{3}{c}{ATS }   \\
		\cline{5-7}
		In-control scenario&Shift type&$E[X_1]$ &$E[X_2]$ &GBE&MOBE&MOBW \\
		\hline
		1. Equal expectations,  &IC&5&5& 200.0 & 200.0 & 200.0\\
		independence&OC-I1&7.5&5& 110.5 & 110.5&67.0 \\
		&OC-I1&10&5& 79.4 & 79.4& 35.9\\
		&OC-I2&7.5&7.5& 79.7 & 79.7& 40.0 \\
		&OC-I2&10&10& 54.8 & 54.8& 21.4\\
		&OC-D1&2.5&5&*&* &133.6\\
		&OC-D2&2.5&2.5&*& *&50.6\\
		\hline
		2. Equal expectations,  &IC&5&5& 199.2 & 200.0 & 200.0\\
		dependence&OC-I1&7.5&5& 115.4& 110.1& 66.9\\
		&OC-I1&10&5&79.9 &78.6 & 35.4 \\
		&OC-I2&7.5&7.5&91.5 &79.8 &40.7 \\
		&OC-I2&10&10&63.4 & 54.9& 21.9\\
		&OC-D1&2.5&15&*&* &136.0\\
		&OC-D2&2.5&7.5&*& *&50.6\\
		\hline
		3. Unequal expectations,  &IC &5&15&200.0& 200.0&200.0  \\
		independence&OC-I1&7.5&15& 110.7&110.7 &71.5 \\
		&OC-I1&10&15&78.4 & 78.4& 37.3\\
		&OC-I2&7.5&22.5& 103.1 &103.1 &63.4\\
		&OC-I2&10&30&80.6 & 80.6& 40.5\\
		&OC-D1&2.5&15&*&* &138.0\\
		&OC-D2&2.5&7.5&*& *&51.5\\
		\hline
		4. Unequal expectations,  &IC &5&15&192.8 &200.0 & 200.0 \\
		dependence&OC-I1&7.5&15&108.8 & 111.7&73.8 \\
		&OC-I1&10&15& 73.7 & 79.1& 38.4\\
		&OC-I2&7.5&22.5& 109.5 & 103.2& 63.9 \\
		&OC-I2&10&30& 83.2 & 80.7& 41.1\\
		&OC-D1&2.5&15&*&* &139.3\\
		&OC-D2&2.5&7.5&*& *&51.5\\
		\hline\hline 
	\end{tabular}
	\label{tab:ats}
\end{table}
From Table \ref{tab:ats} we conclude that our method is able to signal shifts for all three data distributions. The method is a little bit slower when the data are dependent (scenarios 2 and 4) compared to similar independent scenarios (1 and 3), however the difference is small. Next, we compare the results of our BTBE method for paired observations with equal and unequal expectations (scenarios 1 and 2 versus scenarios 3 and 4). $ATS$ values for scenarios 3 and 4 are larger compared to scenarios 1 and 2, this is a direct result from the expectation of $X_2$ being larger in scenarios 3 and 4. 

The two-sided chart applied with the MOBW data has good performance for detecting decreases in both variables. The charts are all designed to have an in-control $ATS$ of 200, the results obtained using the analytical expression give exact results. The simulation results for GBE with $\delta<1$ (scenarios 2 and 4) shows a little bit of simulation error. We note that the $ATS$ values for GBE and MOBE under the independent scenarios (1 and 3) are equal, because both distributions are equal when the data are independent. 

\section{Comparative Analysis}\label{Sec:comparison}
In this section, we compare our proposed BTBE chart. Possible comparative methods to select from are the initial MBTE chart by \citet{xie2011two}. Or more recently, the method by \cite{xie2021multivariate} who extended the idea of \cite{xie2011two} by using the multivariate cumulative sum (CUSUM) control chart. \cite{koutras2017new} and \cite{triantafyllou2020distribution} used control charts based on the order statistic to monitor the bivariate vector-based data. A two-level multivariate Bayesian control chart based on the Marshall-Olkin bivariate exponential (MOBE) distributed data was proposed by \cite{duan2020two}. For bivariate vector-based event data, copula based MEWMA, multivariate double EWMA and multivariate CUSUM charts were proposed by \cite{kuvattana2015comparative}, \cite{sasiwannapong2019efficiency}, and \cite{sukparungsee2021effects}, and the Hotelling’s $T^2$ chart based on the different type of copulas was discussed by \cite{sukparungsee2018bivariate}. For the multivariate vector-based event data, copula based MCUSUM chart was proposed by \cite{sukparungsee2017multivariate} and the MEWMA charts based on transformed exponential data and asymmetric gamma distributions were discussed by \cite{khan2018multivariate} and \cite{flury2018multivariate}, respectively. 

\begin{table}[htbp]
	\caption{Limit $h$ for the MEWMA chart against a fixed $ATS_0=200$}
	\centering
	\begin{tabular}{rcccc}
		\hline \hline
		&\multicolumn{4}{c}{Scenario}\\
		\cline{2-5}
		&1 & 2  &3&4\\ 
		\hline
		$\lambda=0.1$ & 3.60 & 3.87 & 2.09 & 2.12 \\
		$\lambda=1$ &   9.51 & 11.40& 5.33 & 5.86 \\
		\hline\hline
	\end{tabular}
	\label{tab:MEWMA_limits}
\end{table}

For our comparison we select the method by \citet{xie2011two} as this is the most well-known and widely studied method. Their multivariate EWMA (MEWMA) chart is designed for GBE distributed data only. The MEWMA statistic is defined as: 
\begin{equation*}
		z_i =r(X_i-\mu_X)+(1-r)z_{i-1}
\end{equation*}
and the charting statistics is equal to 
\begin{equation*}
	   E_i=z^T_i \Sigma^{-1}_{Z_i} z_i 
\end{equation*}
where $r \in (0,1]$ is the EWMA smoothing parameter, $\Sigma_{Z_i}=r\Sigma_X/2-r$ and, $\mu_X$ and $\Sigma_X$ are the mean vector and covariance matrix respectively (see \citet{xie2011two} for more details). The MEWMA chart signals when $E_i >h$ and  this chart converts to a Hotelling's $T^2$ chart when $r=1$. We run the MEWMA chart with $r=0.1$ and $r=1$. Note that our chart is essentially a Shewhart-type chart and hence, it is most fair to compare our method with the MEWMA chart based on $r=1$.  

\begin{table}[ht]
	\caption{Comparison of ATS performance for our BTBE chart and the MEWMA chart}
	\centering
	\small
	\begin{tabular}{lllll rrr}
		\hline \hline
		&&&&&\multicolumn{3}{c}{ATS }   \\
		\cline{6-8} \\
		&&&&&BTBE&MEWMA&MEWMA \\
		In-control scenario&Shift type&$\theta_1$ &$\theta_2$&$\delta$ &&$r=0.1$&$r=1$ \\
		\hline
		1. Equal expectations, &IC&5&5&1& 200.0 & 200.3 & 199.6 \\
		  independence &OC-I1&7.5&5&1& 110.5 & 102.9 & 106.8 \\
		  &OC-I1&10&5&1& 79.4 & 73.4 & 75.6 \\
		  &OC-I1&20&5&1& 56.2 & 57.9 & 55.0 \\
		  &OC-I2&7.5&7.5&1& 79.7 & 81.1 & 79.8 \\
		  &OC-I2&10&10&1& 54.8 & 60.1 & 56.7 \\
		  &OC-I2&20&20&1& 40.5 & 54.2 & 48.7 \\
		\hline
		2. Equal expectations,  &IC&5&5&0.5& 199.2 & 200.5& 200.3 \\
		  dependence&OC-I1&7.5&5&0.5& 115.4 & 85.3& 103.6\\
		  &OC-I1&10&5&0.5& 79.9 & 59.6& 70.9\\
		  &OC-I1&20&5&0.5& 49.2 & 50.0& 50.7\\
		  &OC-I2&7.5&7.5&0.5& 91.5 & 86.8&  87.0\\
		  &OC-I2&10&10&0.5& 63.4 & 64.9& 63.0 \\
		  &OC-I2&20&20&0.5& 43.6 & 58.0& 54.2\\
		  \hline
		3. Unequal expectations,  &IC &5&15&1& 200.0 & 200.3& 200.0  \\
		  independence&OC-I1&7.5&15&1& 110.7 & 127.2&111.6 \\
		  &OC-I1&10&15&1& 78.4 & 90.0&79.8 \\
		  &OC-I1&20&15&1& 51.2 & 62.0&55.6 \\
		  &OC-I2&7.5&22.5&1& 103.1 & 126.6&108.0  \\
		  &OC-I2&10&30&1& 80.6 & 102.0&88.2 \\
		  &OC-I2&20&60&1& 72.4 & 103.6&90.6 \\
		  \hline
		4. Unequal expectations,  &IC &5&15&0.5& 192.8 & 199.6 &199.5  \\
		  dependence&OC-I1&7.5&15&0.5& 108.8 & 105.7 & 104.5\\
		  &OC-I1&10&15&0.5& 73.7 & 70.7 & 70.6 \\
		  &OC-I1&20&15&0.5& 44.0 & 48.5 & 46.3\\
		  &OC-I2&7.5&22.5&0.5& 109.5 & 139.6 & 119.3 \\
		  &OC-I2&10&30&0.5& 83.2 & 116.2 & 100.8\\
		  &OC-I2&20&60&0.5& 61.2 & 116.8 & 105.6\\
		\hline\hline 
	\end{tabular}
	\label{tab:compare}
\end{table}
We design our method and the MEWMA chart with an in-control overall $ATS_0=200$. The control limits for our BTBE method are taken from Table \ref{tab:cls}. For the MEWMA chart, we obtain the control limits using simulation, the limits are reported in Table \ref{tab:MEWMA_limits}. Note that these limits are different compared to the limits reported by \citet{xie2011two} because we use the $ATS$ as performance measure and they used the $ARL$. 

Table \ref{tab:compare} gives the $ATS$ values for our chart and the MEWMA chart when the data are drawn from a GBE distribution with scenarios similar to those considered in Section 4. Results for our method when $\delta=1$, are obtained using the ATS expression in \emph{Corollary} \ref{cor:gbe} while the results for $\delta <1$ and MEWMA chart are obtained using 10,000 Monte Carlo simulations.  

The results in Table \ref{tab:compare} shows that under equal expectations, our BTBE chart and the MEWMA chart exhibit similar performance, where the MEWMA with $r=0.1$ is a bit quicker in detecting small shifts, as can be expected. Our method is quickest in detecting large shifts because when a signal is observed on the first event time it does not have to wait until the second event before it can signal. 

Next, consider the unequal expectation case (scenarios 3 and 4), which we believe to be more realistic, as seldom multiple components have equal expected life-times. Here, our method outperformed the MEWMA chart both for $r=1$ and for $r=0.1$. With the exception of the OC-I1 shift in scenario 4, where the $ATS$ values are close but the MEWMA chart is a little bit quicker.

On a side note: we have also tested these two charts for decreasing shifts, where we implement the BTBE with a lower control limit only. The MEMWA chart does not detect decrease shift properly and is outperformed by our proposed chart. We do not include the full comparison as downward shift in exponentially distributed data (like GBE data) is difficult to define (see the discussion in Section \ref{Sec:BTBE_charts})

\emph{Overall}, it is noted that; under equal expected time-between-events and small shift sizes the MEWMA chart with $r=0.1$ has better detection ability. However, under unequal expected time-between-events, our BTBE chart showed significant better performance especially when large shifts are present in the data.

\section{Application to AIDS Data}\label{Sec:case_study}
In this section, we implement the proposed BTBE monitoring method to the AIDS dataset obtained from the Centers for Disease Control (CDC) in Atlanta, Georgia, which is also available in R-package \textbf{SurvTrunc} \citep{SurvTrunc}. In the data, we have a total of 295 people, among which 258 are adults and 37 are children. All people in the sample were infected with AIDS through contaminated blood transfusion. For each person, we have a time to HIV infection (the first event $X_{(1)}$ referred to as \emph{infection time}) and we have the total time to AIDS diagnosis (the second event $X_{(2)}$ referred to as \emph{total incubation time}). We have excluded one person from the data whose event time equals zero and most likely contracted AIDS before the blood transfusion. 

The data is summarized in Table~\ref{tab:desc_stat_data} and visualized in Figure~\ref{AIDS_data_plot}. The infection time ($X_{(1)}$) is significantly higher for children than for adults (at a 5\% significance level). The total incubation time ($X_{(2)}$) is shorter for children than for adults (at a 10\% significance level). These results are inline with \cite{hu2014cross} who concluded that children, compared to adults, have shorter HIV incubation times. Note that we will work with transformed data (division by 100), to have shape and scale parameters of the same size \citep{kundu2009estimating}.

\begin{table}[htbp]
	\centering
	\caption{Descriptive statistics}
	\begin{tabular}{rccc}
		\hline \hline
		&Mean Adults & Mean Children &p-value\\
		\hline
	     Infection time (in months) $X_{(1)}$& 48.7 & 56.9 &0.0127\\
		Total incubation time (in months) $X_{(2)}$&81.2 & 76.2&0.0514\\
		Transformed $X_{(1)}$& 0.487 & 0.569\\
		Transformed $X_{(2)}$&0.812 & 0.762\\	
		\hline\hline
	\end{tabular}
	\label{tab:desc_stat_data}
\end{table}

\begin{figure}[!htb]
	\centering
	\includegraphics[width=0.6\linewidth]{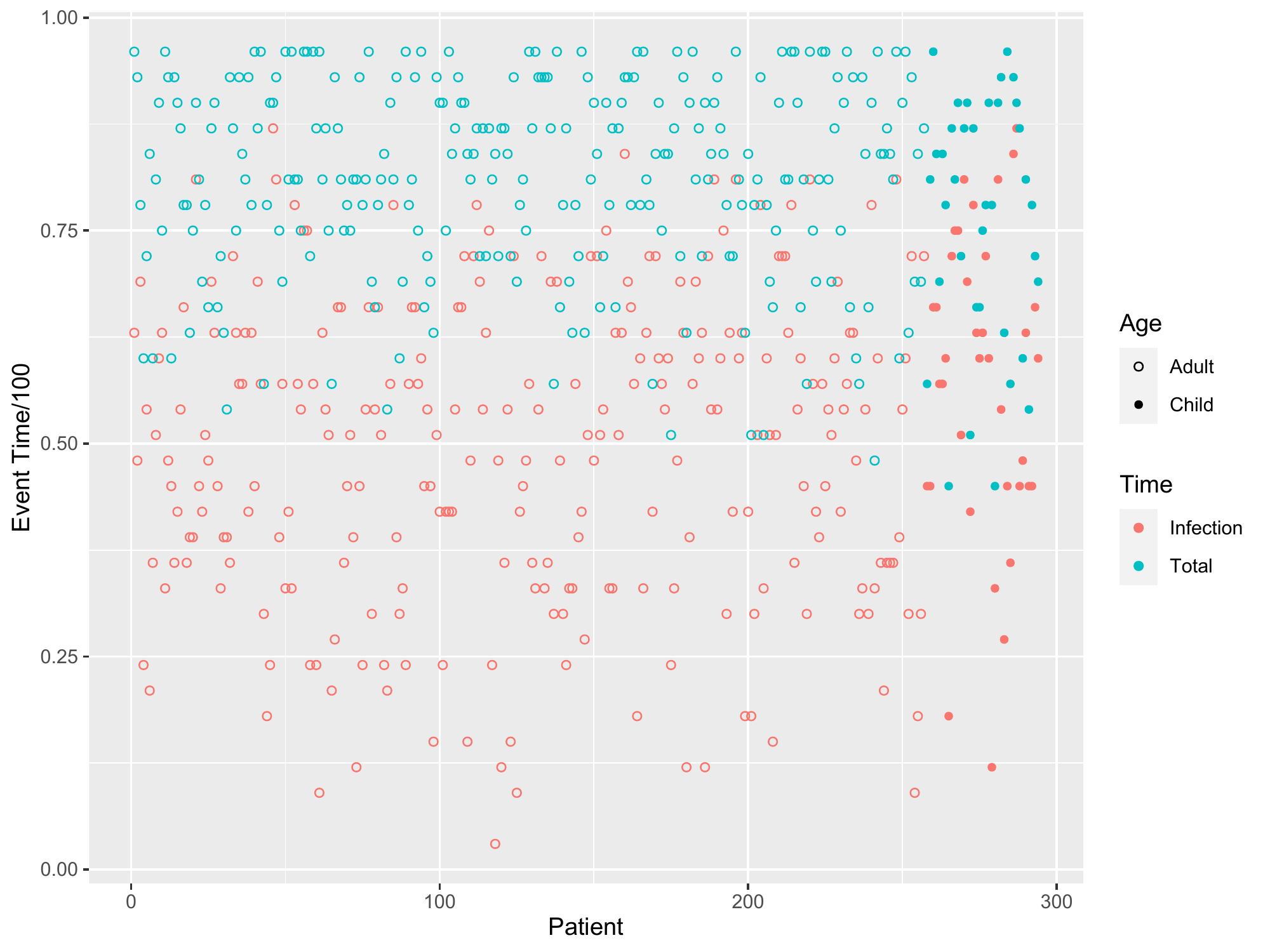}
	\caption{\label{AIDS_data_plot} AIDS data for each subject}
\end{figure}

To illustrate our BTBE chart, we use the adult's data as in-control data and the children's data as shifted data (upward shifted for $X_{(1)}$ and downward shifted for $X_{(2)}$). In this dataset $X_1<X_2$ for all subject, hence it follows that $X_{(1)}=X_1$ and $X_{(2)}=X_2$ for all subjects. 

To implement our chart, we first fit a distribution to the data in a Phase I analysis. We used the R-package \textbf{fitdistrplus} \citep{fitdistrplus} to evaluate various distributions. Figure~\ref{fig:qqplots} shows the Q-Q plots for $X_{(1)}$ and $X_{(2)}$. Note that \cite{marshall1967multivariate} stated that a bivariate dataset fits the MOBW distribution if (i) the marginal distribution of each variable follows a Weibull distribution, and (ii) $X_{(1)}$ follows a Weibull distribution. By the above analysis, it is concluded that a MOBW model best fits our data. 

\begin{figure}[ht]
	\centering
	\includegraphics[width=\linewidth]{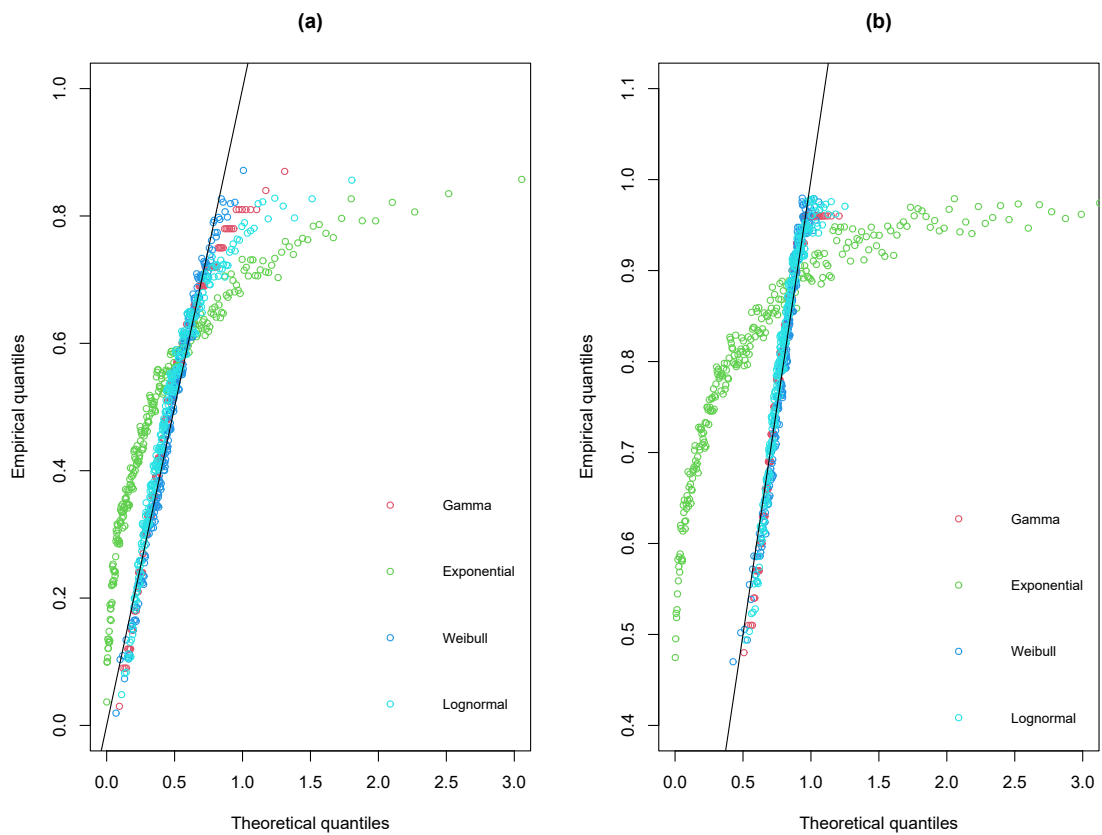}
	\caption{QQ-plot; (a) for $X_{(1)}$, and (b) for $X_{(2)}$}
	\label{fig:qqplots}
\end{figure}

We employ an adjusted EM algorithm to estimate the MOBW parameters, because we only observe $X_1<X_2$. Appendix C provides details on the adjusted EM algorithm. The estimated MOBW parameters of the in-control (adults) data are $\eta=4.31$, $\lambda_1=0.574$ $\lambda_2=0.905$ and  $\lambda_{12}=1.12$. 

Using these estimates, the control limit formulas in Table \ref{tab:cls} yield: $LCL_{(1)} = 0.180$, $UCL_{(1)} = 0.794$ for the first event time and $LCL_{(2)}=\left(x_1^{4.311} + 0.00374\right)^{0.232}$, $UCL_{(2)}=\left(x_1^{4.311} + 2.247\right)^{0.232}$ for the second event time. We have set $ATS_0=25$ because the expected time-between-events is approximately 0.4 and this will yield about one false alarm for approximately each 60 events, i.e. 30 subjects. 

The proposed BTBE chart for the AIDS dataset is plotted in Figure~\ref{fig:BTBE_chart}. The chart plots the univariate superimposed data stream: the first event and the second event of the first subject, followed by the first and second event of the second subject etc. The chart shows that the first event signals five times (the small "1" indicates that the signal is related to a first event). This is in line with the results of Table \ref{tab:desc_stat_data}. 

\begin{figure}[!htb]
	\centering
	\includegraphics[width=\linewidth]{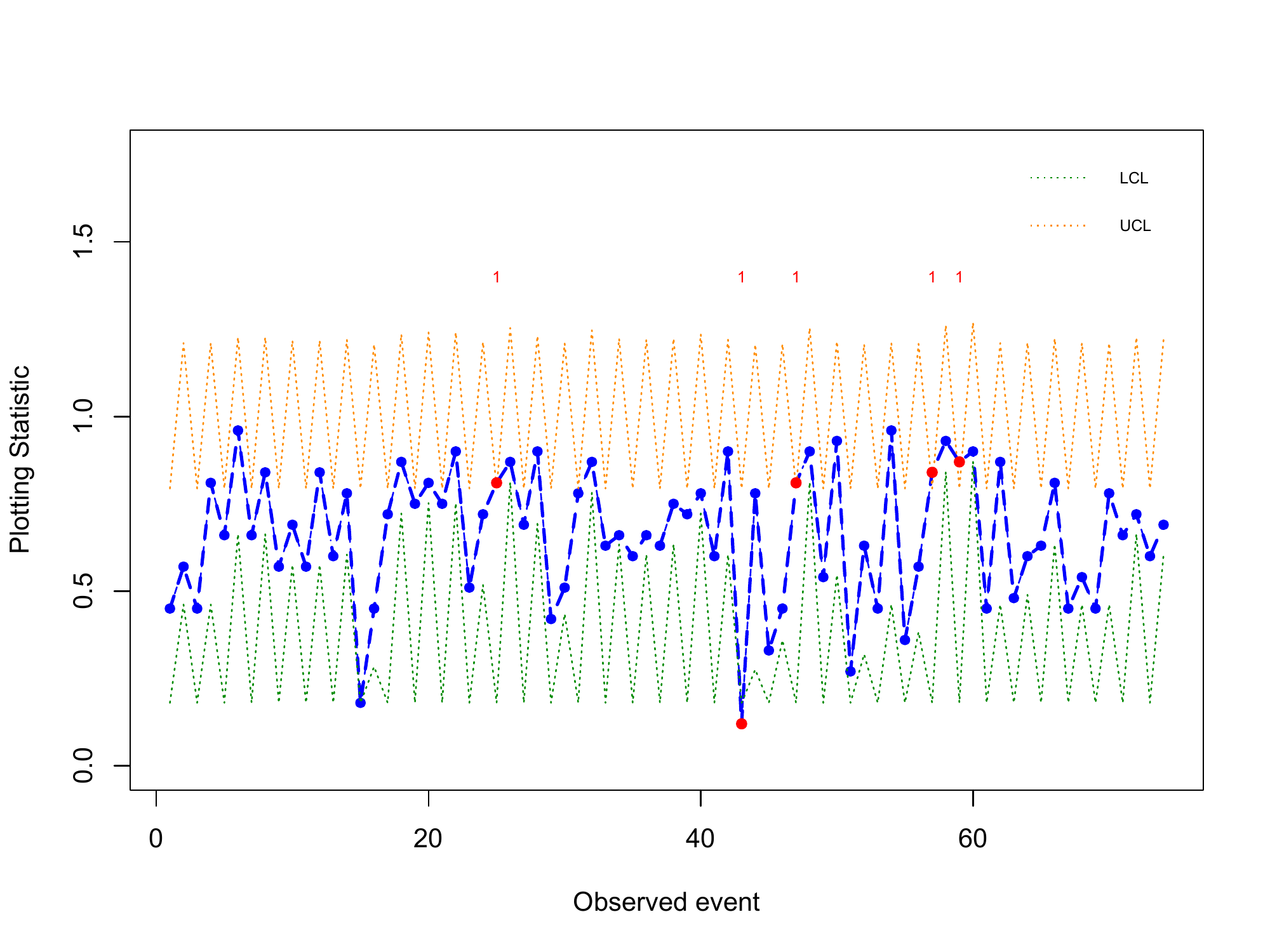}
	\caption{The BTBE chart for the AIDS case study}
	\label{fig:BTBE_chart}
\end{figure}

\section{Concluding Remarks and Recommendations}\label{Sec:conclusions}
This article proposed a novel multivariate time-between-event method referred to as the BTBE chart. The BTBE chart has real-time detection power and does not have a built-in delay like all existing multivariate time-between-event methods. Numerical results and theoretical arguments showed that it is a realistic way to monitor multivariate time-between-event data, and has better detection ability as compared to existing MEWMA method. Our method performs excellently especially when the observations have unequal time-between-events, which we consider a more realistic scenario than equal expectations. Under equal expectations, we have comparable performance. 

Future interesting work would be to extend the method to point-process data, our method is only applicable for the monitoring of vector-based event data. Also extending our bivariate chart to multivariate data will be of interest. In the current methodology, the history of the events is not accumulated so, a method based on EWMA and CUSUM type structures will also be an interesting issue for future research. A potential to fine tune our method is to have a signal once X gets bigger than the UCL rather than to wait until we observe X.

\section*{Supplementary Material}
The supplimentry material of this manuscript, which consists of some proofs is also provided in form of pdf.

\section*{Acknowledgement}
The work of Inez M. Zwetsloot described in this paper was partially supported by a grant from the Research Grants Council of the Hong Kong Special Administrative Region, China (Project No. CityU 21215319) and by a grant from City University of Hong Kong (Project No. 7005567). 

\bibliography{references}

\begin{thebibliography}{}

\bibitem[\protect\citeauthoryear{Ali, Pievatolo, and G{\"o}b}{Ali
  et~al.}{2016}]{ali2016overview}
Ali, S., A.~Pievatolo, and R.~G{\"o}b (2016).
\newblock An overview of control charts for high-quality processes.
\newblock {\em Quality and Reliability Engineering International\/}~{\em
  32\/}(7), 2171--2189.

\bibitem[\protect\citeauthoryear{Bemis, Bain, and Higgins}{Bemis
  et~al.}{1972}]{bemis1972estimation}
Bemis, B.~M., L.~J. Bain, and J.~J. Higgins (1972).
\newblock Estimation and hypothesis testing for the parameters of a bivariate
  exponential distribution.
\newblock {\em Journal of the American Statistical Association\/}~{\em
  67\/}(340), 927--929.

\bibitem[\protect\citeauthoryear{Bhattacharyya and Johnson}{Bhattacharyya and
  Johnson}{1973}]{bhattacharyya1973test}
Bhattacharyya, G. and R.~A. Johnson (1973).
\newblock On a test of independence in a bivariate exponential distribution.
\newblock {\em Journal of the American Statistical Association\/}~{\em
  68\/}(343), 704--706.

\bibitem[\protect\citeauthoryear{Byar}{Byar}{1980}]{byar1980veterans}
Byar, D. (1980).
\newblock The veterans administration study of chemoprophylaxis for recurrent
  stage i bladder tumours: comparisons of placebo, pyridoxine and topical
  thiotepa.
\newblock In {\em Bladder tumors and other topics in urological oncology}, pp.\
   363--370. Springer.

\bibitem[\protect\citeauthoryear{Chiou, Xu, Yan, and Huang}{Chiou
  et~al.}{2018}]{chiou2018semiparametric}
Chiou, S.~H., G.~Xu, J.~Yan, and C.-Y. Huang (2018).
\newblock Semiparametric estimation of the accelerated mean model with panel
  count data under informative examination times.
\newblock {\em Biometrics\/}~{\em 74\/}(3), 944--953.

\bibitem[\protect\citeauthoryear{Delignette-Muller, Dutang,
  et~al.}{Delignette-Muller et~al.}{2015}]{fitdistrplus}
Delignette-Muller, M.~L., C.~Dutang, et~al. (2015).
\newblock fitdistrplus: An r package for fitting distributions.
\newblock {\em Journal of Statistical Software\/}~{\em 64\/}(4), 1--34.

\bibitem[\protect\citeauthoryear{Dinse}{Dinse}{1982}]{dinse1982nonparametric}
Dinse, G.~E. (1982).
\newblock Nonparametric estimation for partially-complete time and type of
  failure data.
\newblock {\em Biometrics\/}~{\em 38\/}(2), 417--431.

\bibitem[\protect\citeauthoryear{Duan, Makis, and Deng}{Duan
  et~al.}{2020}]{duan2020two}
Duan, C., V.~Makis, and C.~Deng (2020).
\newblock A two-level bayesian early fault detection for mechanical equipment
  subject to dependent failure modes.
\newblock {\em Reliability Engineering \& System Safety\/}~{\em 193}, 106676.

\bibitem[\protect\citeauthoryear{Euler}{Euler}{1779}]{euler1779serie}
Euler, L. (1779).
\newblock De serie lambertina plurimis queeius insignibus proprietatibus (on
  the remarkable properties of a series of lambert and others).
\newblock {\em Opera Omnia (Series 1)\/}~{\em 6}, 350--369.

\bibitem[\protect\citeauthoryear{Feizjavadian and Hashemi}{Feizjavadian and
  Hashemi}{2015}]{feizjavadian2015analysis}
Feizjavadian, S. and R.~Hashemi (2015).
\newblock Analysis of dependent competing risks in the presence of progressive
  hybrid censoring using marshall--olkin bivariate weibull distribution.
\newblock {\em Computational Statistics \& Data Analysis\/}~{\em 82}, 19--34.

\bibitem[\protect\citeauthoryear{Flury and Quaglino}{Flury and
  Quaglino}{2018}]{flury2018multivariate}
Flury, M.~I. and M.~B. Quaglino (2018).
\newblock Multivariate ewma control chart with highly asymmetric gamma
  distributions.
\newblock {\em Quality Technology \& Quantitative Management\/}~{\em 15\/}(2),
  230--252.

\bibitem[\protect\citeauthoryear{Gross and Lam}{Gross and
  Lam}{1981}]{gross1981paired}
Gross, A.~J. and C.~F. Lam (1981).
\newblock Paired observations from a survival distribution.
\newblock {\em Biometrics\/}~{\em 37\/}(3), 505--511.

\bibitem[\protect\citeauthoryear{Gumbel}{Gumbel}{1960}]{gumbel1960}
Gumbel, E. (1960).
\newblock Bivariate exponential distributions.
\newblock {\em Journal of the American Statistical Association\/}~{\em
  55\/}(292), 698--707.

\bibitem[\protect\citeauthoryear{Hougaard}{Hougaard}{1986}]{hougaard1986}
Hougaard, P. (1986).
\newblock A class of multivanate failure time distributions.
\newblock {\em Biometrika\/}~{\em 73\/}(3), 671--678.

\bibitem[\protect\citeauthoryear{Hu, Lin, and Nan}{Hu
  et~al.}{2014}]{hu2014cross}
Hu, T., X.~Lin, and B.~Nan (2014).
\newblock Cross-ratio estimation for bivariate failure times with left
  truncation.
\newblock {\em Lifetime Data Analysis\/}~{\em 20\/}(1), 23--37.

\bibitem[\protect\citeauthoryear{Huster, Brookmeyer, and Self}{Huster
  et~al.}{1989}]{huster1989modelling}
Huster, W.~J., R.~Brookmeyer, and S.~G. Self (1989).
\newblock Modelling paired survival data with covariates.
\newblock {\em Biometrics\/}~{\em 45\/}(1), 145--156.

\bibitem[\protect\citeauthoryear{Khan, Aslam, Aldosari, and Jun}{Khan
  et~al.}{2018}]{khan2018multivariate}
Khan, N., M.~Aslam, M.~S. Aldosari, and C.-H. Jun (2018).
\newblock A multivariate control chart for monitoring several exponential
  quality characteristics using ewma.
\newblock {\em IEEE Access\/}~{\em 6}, 70349--70358.

\bibitem[\protect\citeauthoryear{Kotz, Balakrishnan, and Johnson}{Kotz
  et~al.}{2004}]{kotz2004continuous}
Kotz, S., N.~Balakrishnan, and N.~L. Johnson (2004).
\newblock {\em Continuous multivariate distributions, Models and applications},
  Volume~1.
\newblock John Wiley \& Sons.

\bibitem[\protect\citeauthoryear{Koutras and Sofikitou}{Koutras and
  Sofikitou}{2017}]{koutras2017new}
Koutras, M. and E.~Sofikitou (2017).
\newblock A new bivariate semiparametric control chart based on order
  statistics and concomitants.
\newblock {\em Statistics \& Probability Letters\/}~{\em 129}, 340--347.

\bibitem[\protect\citeauthoryear{Kundu and Dey}{Kundu and
  Dey}{2009}]{kundu2009estimating}
Kundu, D. and A.~K. Dey (2009).
\newblock Estimating the parameters of the marshall--olkin bivariate weibull
  distribution by em algorithm.
\newblock {\em Computational Statistics \& Data Analysis\/}~{\em 53\/}(4),
  956--965.

\bibitem[\protect\citeauthoryear{Kuvattana and Sukparungsee}{Kuvattana and
  Sukparungsee}{2015}]{kuvattana2015comparative}
Kuvattana, S. and S.~Sukparungsee (2015).
\newblock Comparative the performance of control charts based on copulas.
\newblock In {\em The World Congress on Engineering and Computer Science}, pp.\
   47--58. Springer.

\bibitem[\protect\citeauthoryear{Lambert}{Lambert}{1758}]{lambert1758observationes}
Lambert, J.~H. (1758).
\newblock Observationes variae in mathesin puram.
\newblock {\em Acta Helvetica\/}~{\em 3\/}(1), 128--168.

\bibitem[\protect\citeauthoryear{Li, Sun, and Song}{Li
  et~al.}{2012}]{li2012statistical}
Li, Y., J.~Sun, and S.~Song (2012).
\newblock Statistical analysis of bivariate failure time data with
  marshall--olkin weibull models.
\newblock {\em Computational Statistics \& Data Analysis\/}~{\em 56\/}(6),
  2041--2050.

\bibitem[\protect\citeauthoryear{Lu and Bhattacharyya}{Lu and
  Bhattacharyya}{1991}]{lu1991inference}
Lu, J. and G.~Bhattacharyya (1991).
\newblock Inference procedure for bivariate exponential distributions.
\newblock {\em Statistics \& Probability Letters\/}~{\em 12\/}(1), 37--50.

\bibitem[\protect\citeauthoryear{Mahmood, Wittenberg, Zwetsloot, Wang, and
  Tsui}{Mahmood et~al.}{2019}]{mahmood2019monitoring}
Mahmood, T., P.~Wittenberg, I.~M. Zwetsloot, H.~Wang, and K.~L. Tsui (2019).
\newblock Monitoring data quality for telehealth systems in the presence of
  missing data.
\newblock {\em International journal of medical informatics\/}~{\em 126},
  156--163.

\bibitem[\protect\citeauthoryear{Mahmood and Xie}{Mahmood and
  Xie}{2019}]{mahmood2019models}
Mahmood, T. and M.~Xie (2019).
\newblock Models and monitoring of zero-inflated processes: The past and
  current trends.
\newblock {\em Quality and Reliability Engineering International\/}~{\em
  35\/}(8), 2540--2557.

\bibitem[\protect\citeauthoryear{Marshall and Olkin}{Marshall and
  Olkin}{1967}]{marshall1967multivariate}
Marshall, A.~W. and I.~Olkin (1967).
\newblock A multivariate exponential distribution.
\newblock {\em Journal of the American Statistical Association\/}~{\em
  62\/}(317), 30--44.

\bibitem[\protect\citeauthoryear{Montgomery}{Montgomery}{2017}]{montgomery2017design}
Montgomery, D.~C. (2017).
\newblock {\em Design and analysis of experiments\/} (9-th ed.).
\newblock John Wiley \& Sons.

\bibitem[\protect\citeauthoryear{Moreira, de~U{\~n}a-{\'A}lvarez, and
  Braekers}{Moreira et~al.}{2021}]{moreira2021nonparametric}
Moreira, C., J.~de~U{\~n}a-{\'A}lvarez, and R.~Braekers (2021).
\newblock Nonparametric estimation of a distribution function from doubly
  truncated data under dependence.
\newblock {\em Computational Statistics\/}.

\bibitem[\protect\citeauthoryear{Nelson}{Nelson}{1982}]{nelson1982applied}
Nelson, W. (1982).
\newblock {\em Applied life data analysis, 1982}.
\newblock John Wiley \& Sons.

\bibitem[\protect\citeauthoryear{Proschan and Sullo}{Proschan and
  Sullo}{1976}]{proschan1976estimating}
Proschan, F. and P.~Sullo (1976).
\newblock Estimating the parameters of a multivariate exponential distribution.
\newblock {\em Journal of the American Statistical Association\/}~{\em
  71\/}(354), 465--472.

\bibitem[\protect\citeauthoryear{Rennert}{Rennert}{2018}]{SurvTrunc}
Rennert, L. (2018).
\newblock {\em SurvTrunc: Analysis of Doubly Truncated Data}.
\newblock R package version 0.1.0.

\bibitem[\protect\citeauthoryear{Saghir and Lin}{Saghir and
  Lin}{2015}]{saghir2015control}
Saghir, A. and Z.~Lin (2015).
\newblock Control charts for dispersed count data: an overview.
\newblock {\em Quality and Reliability Engineering International\/}~{\em
  31\/}(5), 725--739.

\bibitem[\protect\citeauthoryear{Sasiwannapong, Sukparungsee, Busababodhin, and
  Areepong}{Sasiwannapong et~al.}{2019}]{sasiwannapong2019efficiency}
Sasiwannapong, S., S.~Sukparungsee, P.~Busababodhin, and Y.~Areepong (2019).
\newblock The efficiency of constructed bivariate copulas for mewma and
  hotelling’s t2 control charts.
\newblock {\em Communications in Statistics-Simulation and Computation\/}.

\bibitem[\protect\citeauthoryear{Sparks, Jin, Karimi, Paris, and
  MacIntyre}{Sparks et~al.}{2019}]{sparks2019real}
Sparks, R., B.~Jin, S.~Karimi, C.~Paris, and C.~MacIntyre (2019).
\newblock Real-time monitoring of events applied to syndromic surveillance.
\newblock {\em Quality Engineering\/}~{\em 31\/}(1), 73--90.

\bibitem[\protect\citeauthoryear{Sparks, Joshi, Paris, Karimi, and
  MacIntyre}{Sparks et~al.}{2020}]{sparks2020monitoring}
Sparks, R., A.~Joshi, C.~Paris, S.~Karimi, and C.~R. MacIntyre (2020).
\newblock Monitoring events with application to syndromic surveillance using
  social media data.
\newblock {\em Engineering Reports\/}~{\em 2\/}(5), e12152.

\bibitem[\protect\citeauthoryear{Sukparungsee, Kuvattana, Busababodhin, and
  Areepong}{Sukparungsee et~al.}{2017}]{sukparungsee2017multivariate}
Sukparungsee, S., S.~Kuvattana, P.~Busababodhin, and Y.~Areepong (2017).
\newblock Multivariate copulas on the mcusum control chart.
\newblock {\em Cogent Mathematics\/}~{\em 4\/}(1), 1342318.

\bibitem[\protect\citeauthoryear{Sukparungsee, Kuvattana, Busababodhin, and
  Areepong}{Sukparungsee et~al.}{2018}]{sukparungsee2018bivariate}
Sukparungsee, S., S.~Kuvattana, P.~Busababodhin, and Y.~Areepong (2018).
\newblock Bivariate copulas on the hotelling's $t^2$ control chart.
\newblock {\em Communications in Statistics-Simulation and Computation\/}~{\em
  47\/}(2), 413--419.

\bibitem[\protect\citeauthoryear{Sukparungsee, Sasiwannapong, Busababodhin, and
  Areepong}{Sukparungsee et~al.}{2021}]{sukparungsee2021effects}
Sukparungsee, S., S.~Sasiwannapong, P.~Busababodhin, and Y.~Areepong (2021).
\newblock The effects of constructed bivariate copulas on multivariate control
  charts effectiveness.
\newblock {\em Quality and Reliability Engineering International\/}~{\em
  37\/}(5), 2156--2168.

\bibitem[\protect\citeauthoryear{Triantafyllou and Panayiotou}{Triantafyllou
  and Panayiotou}{2020}]{triantafyllou2020distribution}
Triantafyllou, I.~S. and N.~I. Panayiotou (2020).
\newblock Distribution-free monitoring schemes based on order statistics: a
  general approach.
\newblock {\em Journal of Applied Statistics\/}~{\em 47\/}(12), 2230--2257.

\bibitem[\protect\citeauthoryear{Wienke}{Wienke}{2010}]{wienke2010frailty}
Wienke, A. (2010).
\newblock {\em Frailty models in survival analysis}.
\newblock Chapman and Hall/CRC.

\bibitem[\protect\citeauthoryear{Xie, Sun, Castagliola, Hu, and Tang}{Xie
  et~al.}{2021}]{xie2021multivariate}
Xie, F., J.~Sun, P.~Castagliola, X.~Hu, and A.~Tang (2021).
\newblock A multivariate cusum control chart for monitoring gumbel's bivariate
  exponential data.
\newblock {\em Quality and Reliability Engineering International\/}~{\em
  37\/}(1), 10--33.

\bibitem[\protect\citeauthoryear{Xie, Xie, and Goh}{Xie
  et~al.}{2011}]{xie2011two}
Xie, Y., M.~Xie, and T.~N. Goh (2011).
\newblock Two mewma charts for gumbel's bivariate exponential distribution.
\newblock {\em Journal of Quality Technology\/}~{\em 43\/}(1), 50--65.

\bibitem[\protect\citeauthoryear{Zwetsloot, Mahmood, and Woodall}{Zwetsloot
  et~al.}{2021}]{zwetsloot2020multivariate}
Zwetsloot, I.~M., T.~Mahmood, and W.~H. Woodall (2021).
\newblock Multivariate time-between-events monitoring--an overview and some
  (overlooked) underlying complexities.
\newblock {\em Quality Engineering\/}~{\em 33\/}(1), 13--25.

\bibitem[\protect\citeauthoryear{Zwetsloot and Woodall}{Zwetsloot and
  Woodall}{2019}]{zwetsloot2019review}
Zwetsloot, I.~M. and W.~H. Woodall (2019).
\newblock A review of some sampling and aggregation strategies for basic
  statistical process monitoring.
\newblock {\em Journal of Quality Technology\/}, 1--16.

\end{thebibliography}

\newpage
\section*{Appendix A} \label{app:A}
This Appendix provides the proofs for Theorems \ref{thm:F_x1}, \ref{thm:F_x2} and \ref{thm:ats}. 

\subsection*{A.1 Proof Theorem 1}
This proof is based on censored data modeling. We follow the same steps as \cite{wienke2010frailty}.
\begin{proof}[Proof of Theorem \ref{thm:F_x1}]
	We derive the conditional cumulative distribution function for the first observed event time $X_{(1)}$ for each case $X_1 < X_2$, $X_1 > X_2$, and $X_1 = X_2$, separately. 
	
	When $X_1 < X_2$, the conditional CDF is defined as $F_{X_{(1)}}(x_{(1)}) =P(X_1 \leq x_{(1)}, X_1<X_2) /P(X_1 < X_2)$, for which
	\begin{align*}
	P(X_1 < X_2 ) &= \int_{0}^{\infty}\int_{x_1}^{\infty} f ( x_1,x_2 )\;dx_2dx_1 
	= \int_{0}^{\infty} \int_{x_1}^{\infty} \frac{\partial}{\partial x_2} S_1 (x_1,x_2) \; dx_2dx_1 \\
	&= \int_{0}^{\infty} S_1 (x_1,x_2) \bigg\rvert^{x_2=\infty}_{x_2=x_1} \; dx_1 
	=  - \int_{0}^{\infty} S_1 (x_1,x_1) \; dx_1 
	\end{align*}
	where $f(x_1,x_2)= \frac{\partial^2 }{\partial x_1 \partial x_2} S(x_1,x_2)=\frac{\partial }{ \partial x_2} S_1(x_1,x_2)$ and $ S_1 (x_1,\infty) = 0$. We also obtain 
	\begin{align*}
	P(X_1 \leq x_{(1)}, X_1 < X_2) & =\int_{0}^{x_{(1)}}\int_{x_1}^{\infty} f( x_1,x_2 )\;dx_2dx_1 
	= \int_{0}^{x_{(1)}} \int_{x_1}^{\infty} \frac{\partial}{\partial x_2} S_1 (x_1,x_2) \; dx_2dx_1 \\
	&= \int_{0}^{x_{(1)}} S_1 (x_1,x_2) \bigg\rvert^{x_2=\infty}_{x_2=x_1} \; dx_1 
	=  - \int_{0}^{x_{(1)}} S_1 (x_1,x_1) \; dx_1 . 
	\end{align*}
	Therefore, $F_{X_{(1)}}(x_{(1)})= \frac{\bigintsss_{0}^{x_{(1)}} S_1 (x_1,x_1) \; dx_1}{ \bigintsss_{0}^{\infty} S_1 (x_1,x_1) \; dx_1 }$ if $X_1 < X_2$. A symmetric argument gives $F_{X_{(1)}}(x_{(1)})= \frac{\bigintsss_{0}^{x_{(1)}} S_2 (x_2,x_2) \; dx_2}{ \bigintsss_{0}^{\infty} S_2 (x_2,x_2) \; dx_2 }$ if $X_1>X_2$.
	
	When $X_1=X_2$, the conditional CDF is defined as $F_{X_{(1)}}(x_{(1)}) = P(X_{(1)} \leq x_{(1)}, X_1=X_2)/P(X_2 = X_1)$, where $P(X_1 \leq x_{(1)}, X_1=X_2) = \bigintsss_{0}^{x_{(1)}} f(x_1,x_1) \; dx_1 $ and $P(X_1 = X_2) = \bigintsss_{0}^{\infty} f (x_1,x_1) \; dx_1 $. \\
	
	Similarly, the conditional survival function is defined as $S_{X_{(1)}}(x_{(1)}) =P(X_1 > x_{(1)}, X_1<X_2) /P(X_1 < X_2)$ if $X_1 < X_2$, for which
	\begin{align*}
		P(X_1 > x_{(1)}, X_1 < X_2) &= \int_{x_{(1)}}^{\infty}\int_{x_1}^{\infty} f( x_1,x_2 )\;dx_2dx_1 
		= \int_{x_{(1)}}^{\infty}\int_{x_1}^{\infty} \frac{\partial}{\partial x_2} S_1 (x_1,x_2) \; dx_2dx_1 \\
		&= \int_{x_{(1)}}^{\infty} S_1 (x_1,x_2) \bigg\rvert^{x_2=\infty}_{x_2=x_1} \; dx_1 
		=  - \int_{x_{(1)}}^{\infty} S_1 (x_1,x_1) \; dx_1 .
	\end{align*}
	So it follows that $S_{X_{(1)}}(x_{(1)}) = \frac{\bigintsss_{x_{(1)}}^{\infty} S_1 (x_1, x_1) dx_1}{ \bigintsss_0^{\infty} S_1 (x_1, x_1) dx_1}$ if $X_1<X_2$. Equivalently, we obtain $ S_{X_{(1)}}(x_{(1)}) = P(X_2 >  x_{(1)}, X_2<X_1)/P(X_2 < X_1 ) = \frac{\bigintsss_{x_{(1)}}^{\infty} S_2(x_2, x_2) dx_2 }{ \bigintsss_0^{\infty} S_2 (x_2, x_2) dx_2}$ if $X_1 > X_2 $ and $S_{X_{(1)}}(x_{(1)}) =  P(X_1 > x_{(1)}, X_1=X_2)/P(X_1 = X_2) = \frac{\bigintsss_{x_{(1)}}^{\infty} f(x) dx_1}{ \bigintsss_0^{\infty} f(x) dx_1}$ if $X_1 = X_2$.
	This completes the proof.
\end{proof}

\subsection*{A.2 Proof Theorem 2}
\noindent
For the proof of Theorem~\ref{thm:F_x2}, we need the following three lemmas.
\begin{lemma}\label{lem:f_x}
	The probability density function of $X_{(1)}$ is given as
	\begin{equation}
    f_{X_{(1)}}(x_{(1)}) = 
    \begin{cases}
				-S_1 (x_1, x_1) \;\;\; \text{if}\; X_1 < X_2\\
				-S_2 (x_2, x_2) \;\;\; \text{if}\; X_1 > X_2.
	\end{cases}
	\end{equation}
\end{lemma}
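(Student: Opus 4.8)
The plan is to recognise $f_{X_{(1)}}$ as the (sub-)density obtained by differentiating, with respect to the threshold $x_{(1)}$, the joint probability $P(X_{(1)}\le x_{(1)},\,X_1<X_2)$ that was already computed inside the proof of Theorem~\ref{thm:F_x1}, and then to invoke the Fundamental Theorem of Calculus. No new integration is needed; the work was essentially done when establishing Theorem~\ref{thm:F_x1}.

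First I would treat the case $X_1<X_2$. From the proof of Theorem~\ref{thm:F_x1} we have the identity
\begin{equation*}
	P(X_1\le x_{(1)},\,X_1<X_2) = -\int_0^{x_{(1)}} S_1(x_1,x_1)\,dx_1 .
\end{equation*}
The density of $X_{(1)}$ associated with this ordering event is the derivative of the right-hand side with respect to its upper limit $x_{(1)}$, which by the Fundamental Theorem of Calculus equals $-S_1(x_{(1)},x_{(1)})$. This gives the first branch of the claim. A symmetric argument, exchanging the roles of $X_1$ and $X_2$ (hence of $S_1$ and $S_2$), applied to $P(X_2\le x_{(1)},\,X_2<X_1)=-\int_0^{x_{(1)}} S_2(x_2,x_2)\,dx_2$ produces the second branch $-S_2(x_{(2)},x_{(2)})$ for the case $X_1>X_2$.

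The one point requiring care — rather than a genuine obstacle — is the normalisation convention. The expression stated in the lemma is the \emph{unnormalised} sub-density tied to each ordering event, so that $\int_0^\infty -S_1(x_1,x_1)\,dx_1 = P(X_1<X_2)$, and it is therefore \emph{not} the conditional density one would get by differentiating the normalised conditional CDF of Theorem~\ref{thm:F_x1}. I would flag explicitly that this unnormalised form is exactly what is needed downstream to assemble the conditional law of $X_{(2)}$ in Theorem~\ref{thm:F_x2}. Finally, since $S$ is decreasing in each argument, its partial derivatives $S_1$ and $S_2$ are nonpositive, so each branch $-S_1$, $-S_2$ is nonnegative, as any density must be; this sign check confirms the result is well posed.
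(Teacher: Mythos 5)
Your proposal is correct and follows essentially the same route as the paper's own proof: differentiate the sub-probability $P(X_{(1)}\le x_{(1)},\,X_1<X_2)=-\int_0^{x_{(1)}}S_1(x_1,x_1)\,dx_1$ from Theorem~\ref{thm:F_x1} with respect to its upper limit via the Fundamental Theorem of Calculus, and argue symmetrically for $X_1>X_2$. Your explicit remark that the lemma gives the \emph{unnormalised} sub-density (integrating to $P(X_1<X_2)$, not $1$), which is exactly what is consumed in the proof of Theorem~\ref{thm:F_x2}, is a point the paper leaves implicit, but it does not constitute a different argument.
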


\begin{proof}[Proof of Lemma~\ref{lem:f_x}]
	From Theorem \ref{thm:F_x1}, we know that $P(X_{(1)} \leq x_{(1)}, X_1 < X_2) = - \int_{0}^{x_{(1)}} S_1 (x_1,x_1) \;dx_1$. The probability density function of $X_{(1)}$ when $X_1 < X_2$ is
	\begin{align*}
	f_{X_{(1)}}(x_1)=\dfrac{d}{dx_1}\; \int_{0}^{x_1} -S_1 (x,x) \;dx = -S_1 (x_1,x_1).
	\end{align*} 
	Similarly, for $X_1 > X_2$, we obtained $f_{X_{(1)}}(x_2)=- S_2 (x_2,x_2)$. This completes the proof.
\end{proof}

\begin{lemma} \label{lem:F_X_1}
	The partial distribution function of $(X_{(1)}, X_{(2)})$ with respect to $X_{(1)}$ is defined as $ F_{X_{(1)}}(x_{(1)}, x_{(2)}) = P(X_{(1)}=x_{(1)}, X_{(2)}\leq x_{(2)})$ which is equal to
	\begin{equation}
    F_{X_{(1)}}(x_{(1)}, x_{(2)}) = 
    \begin{cases}
				S_1 (x_1, x_2)-S_1 (x_1, x_1) \;\;\; \text{if}\; X_1 < X_2\\
				S_2 (x_1, x_2)-S_2 (x_2, x_2) \;\;\; \text{if}\; X_1 > X_2.
	\end{cases}
	\end{equation}
\end{lemma}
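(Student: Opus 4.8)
The plan is to reuse the case-split of the previous proofs and, in each case, reduce the mixed object to a one–dimensional integral of the joint density that collapses by the fundamental theorem of calculus.

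First I would treat the region $X_1<X_2$, where $X_{(1)}=X_1$ and $X_{(2)}=X_2$. Since $F_{X_{(1)}}(x_{(1)},x_{(2)})$ is a \emph{density} in its first argument and a distribution in its second, fixing $X_{(1)}=x_1$ and imposing $X_{(2)}\le x_2$ means holding $X_1=x_1$ and letting the second coordinate range over $(x_1,x_2]$; the lower limit is $x_1$, not $0$, precisely because the ordering $X_2>X_1$ is in force. Hence
\begin{align*}
F_{X_{(1)}}(x_{(1)},x_{(2)}) = \int_{x_1}^{x_2} f(x_1,u)\,du .
\end{align*}
I would then substitute the identity $f(x_1,u)=\frac{\partial}{\partial u}S_1(x_1,u)$, already exploited in the proof of Theorem~\ref{thm:F_x1}, and integrate:
\begin{align*}
\int_{x_1}^{x_2}\frac{\partial}{\partial u}S_1(x_1,u)\,du = S_1(x_1,x_2)-S_1(x_1,x_1),
\end{align*}
which is the claimed expression.

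A symmetric argument settles $X_1>X_2$. There $X_{(1)}=X_2$ and $X_{(2)}=X_1$, so the density point sits on $X_2$ while $X_1$ ranges from the value of $X_2$ up to $x_{(2)}$; using instead $f=\frac{\partial}{\partial x_1}S_2(x_1,x_2)$ and integrating the first coordinate from $x_2$ to $x_1$ yields $S_2(x_1,x_2)-S_2(x_2,x_2)$ after re-expressing in the original labels.

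The telescoping integral itself is immediate once the integrand is rewritten via the $S_1$/$S_2$ identities, so I do not anticipate any analytic difficulty there. The one place calling for care — and what I regard as the crux — is setting up the integral correctly: recognizing that the object is a density in $x_{(1)}$ but a sub-distribution in $x_{(2)}$, and, above all, that the ordering constraint pins the lower limit of integration at $x_{(1)}$ itself. Tracking the variable identification in the second case, where the first observed event is $X_2$ and so the arguments of $S_2$ appear swapped relative to the original coordinates, is the other spot where a labelling slip could creep in.
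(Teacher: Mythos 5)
Your proposal is correct and follows essentially the same route as the paper's own proof: in each ordering case the paper also writes $F_{X_{(1)}}(x_{(1)},x_{(2)})=P(X_1=x_1,\,x_1<X_2\le x_2)=\int_{x_1}^{x_2}f(x_1,u)\,du$, substitutes $f=\frac{\partial}{\partial u}S_1(x_1,u)$ (resp.\ $f=\frac{\partial}{\partial u}S_2(u,x_2)$ for $X_1>X_2$), and evaluates by the fundamental theorem of calculus to get $S_1(x_1,x_2)-S_1(x_1,x_1)$ (resp.\ $S_2(x_1,x_2)-S_2(x_2,x_2)$). Your identification of the lower limit $x_{(1)}$ as the point forced by the ordering constraint, and of the label swap in the second case, matches the paper exactly.
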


\begin{proof}[Proof of Lemma~\ref{lem:F_X_1}]
	For $X_1 < X_2$, it follows that 
	\begin{align*}
	F_{X_{(1)}}(x_{(1)}, x_{(2)}) &= P(X_1=x_1, X_2 \leq x_2, X_1 < X_2)=  P(X_1=x_1, x_1< X_2 \leq x_2)\\
	&= \int_{x_1}^{x_2} f(x_1,x_2)dx_2
	= \int_{x_1}^{x_2} \frac{\partial}{\partial x_2} S_{1}(x_1,x_2)dx_2 
	= S_1 (x_1,x_2) \bigg\rvert^{x_2=x_2}_{x_2=x1} \\
	&=S_{1}(x_1,x_2) - S_{1}(x_1,x_1).
	\end{align*}
	
	Equivalently, when $X_1 > X_2$, it follows that  $F_{X_{(1)}}(x_{(1)},x_{(2)}) =S_{2}(x_1,x_2) - S_{2}(x_2,x_2)$. This completes the proof.
\end{proof}

\begin{lemma} \label{lem:S_X_1}
	The partial survival function of $(X_{(1)}, X_{(2)})$ with respect to $X_{(1)}$ is defined as $S_{X_{(1)}}(x_{(1)}, x_{(2)}) = 	P(X_{(1)}=x_{(1)}, X_{(2)} > x_{(2)})$ and is equal to 
	\begin{equation}
    S_{X_{(1)}}(x_{(1)}, x_{(2)}) = 
    \begin{cases}
				 -S_1 (x_1, x_2) \;\;\; \text{if}\; X_1 < X_2\\
				 -S_2 (x_1, x_2) \;\;\; \text{if}\; X_1 > X_2.
	\end{cases}
	\end{equation}
\end{lemma}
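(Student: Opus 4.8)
The plan is to mirror the proof of Lemma~\ref{lem:F_X_1}, handling the two cases $X_1 < X_2$ and $X_1 > X_2$ separately and invoking symmetry for the second from the first. First I would rewrite the event defining the partial survival function in terms of the original variables. In the case $X_1 < X_2$ we have $X_{(1)} = X_1$ and $X_{(2)} = X_2$, so that
\begin{align*}
	S_{X_{(1)}}(x_{(1)}, x_{(2)}) = P(X_1 = x_1, X_2 > x_2, X_1 < X_2).
\end{align*}
The key simplification is to notice that since $x_2 = x_{(2)} > x_{(1)} = x_1$, the requirement $X_2 > x_2$ already forces $X_2 > x_1 = X_1$; hence the constraint $X_1 < X_2$ is redundant and the probability collapses to $P(X_1 = x_1, X_2 > x_2)$.

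Next I would integrate out $X_2$ exactly as in the proofs of Theorem~\ref{thm:F_x1} and Lemma~\ref{lem:F_X_1}, using $f(x_1,x_2) = \tfrac{\partial}{\partial x_2} S_1(x_1,x_2)$. This gives
\begin{align*}
	P(X_1 = x_1, X_2 > x_2) = \int_{x_2}^{\infty} \frac{\partial}{\partial x_2'} S_1(x_1, x_2') \, dx_2' = S_1(x_1, x_2')\big\rvert_{x_2'=x_2}^{x_2'=\infty},
\end{align*}
and invoking the boundary condition $S_1(x_1,\infty) = 0$ (established in Theorem~\ref{thm:F_x1}) leaves $-S_1(x_1,x_2)$, as claimed. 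A symmetric argument, swapping the roles of the two subprocesses, produces the $X_1 > X_2$ case with $S_2$ in place of $S_1$.

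As an alternative (and arguably cleaner) route, I would observe that on the event $\{X_{(1)} = x_{(1)}\}$ the sub-events $\{X_{(2)} \le x_{(2)}\}$ and $\{X_{(2)} > x_{(2)}\}$ are complementary, so
\begin{align*}
	S_{X_{(1)}}(x_{(1)}, x_{(2)}) = f_{X_{(1)}}(x_{(1)}) - F_{X_{(1)}}(x_{(1)}, x_{(2)}).
\end{align*}
Substituting $f_{X_{(1)}}(x_{(1)}) = -S_1(x_1,x_1)$ from Lemma~\ref{lem:f_x} and $F_{X_{(1)}}(x_{(1)},x_{(2)}) = S_1(x_1,x_2) - S_1(x_1,x_1)$ from Lemma~\ref{lem:F_X_1} yields $-S_1(x_1,x_2)$ in one line, with the $X_1 > X_2$ case following identically. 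There is no genuine obstacle here: the computation is routine. The only points requiring care are the bookkeeping of the conditioning event—specifically recognizing that the ordering constraint $X_1 < X_2$ is automatically satisfied once $X_2 > x_2 \ge x_1$—and confirming that the boundary term at infinity vanishes.
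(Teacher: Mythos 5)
Your primary argument is exactly the paper's proof: write $S_{X_{(1)}}(x_{(1)},x_{(2)}) = P(X_1=x_1,\,X_2>x_2,\,X_1<X_2)$, integrate $f(x_1,x_2)=\frac{\partial}{\partial x_2}S_1(x_1,x_2)$ over $x_2$ to $\infty$, and use $S_1(x_1,\infty)=0$, with the $X_1>X_2$ case by symmetry; your explicit remark that the ordering constraint is made redundant by $X_2>x_2>x_1$ is a small clarification the paper leaves implicit. Your alternative route, $S_{X_{(1)}}(x_{(1)},x_{(2)}) = f_{X_{(1)}}(x_{(1)}) - F_{X_{(1)}}(x_{(1)},x_{(2)})$ combined with Lemmas~\ref{lem:f_x} and~\ref{lem:F_X_1}, is also correct and arguably cleaner, though it is not the path the paper takes.
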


\begin{proof}[Proof of Lemma~\ref{lem:S_X_1}]
	For $X_1 < X_2$ it follows that
	\begin{align*}
	S_{X_{(1)}}(x_{(1)},x_{(2)}) &= P(X_1=x_1, X_2 > x_2, X_1 < X_2)\\
		 &= \int_{x_2}^\infty f(x_1,x_2)dx_2
	= \int_{x_2}^\infty \frac{\partial}{\partial x_2} S_{1}(x_1,x_2)dx_2\\
	&= S_1 (x_1,x_2) \bigg\rvert^{x_2=\infty}_{x_2=x_2}
	= - S_{1}(x_1,x_2).
	\end{align*}
	Equivalently when $X_1 > X_2$, it follows that  $S_{X_{(1)}}(x_{(1)},x_{(2)}) = - S_{2}(x_1,x_2)$. This completes the proof.
\end{proof}


\begin{proof}[Proof of Theorem~\ref{thm:F_x2}]
	By definition, the conditional distribution function is
	$$
	F_{X_{(2)}|X_{(1)}}(x_{(2)}|x_{(1)}) = P[ X_{(2)} \leq x_{(2)}|X_{(1)}=x_{(1)}]= \frac{P[X_{(1)}=x_{(1)}, X_{(2)} \leq x_{(2)}]}{P[X_{(1)}=x_{(1)}]}
	=  \frac{F_{X_{(1)}}(x_{(1)},x_{(2)})}{f_{X_{(1)}}(x_{(1)})}.
	$$
	From lemmas \ref{lem:f_x} and \ref{lem:F_X_1} it follows that 
	\begin{equation}
    F_{X_{(2)}|X_{(1)}}(x_{(2)}| x_{(1)}) = 
    \begin{cases}
				1-\frac{S_1(x_1, x_2)}{S_1(x_1, x_1)} \;\;\; \text{if}\; X_1 < X_2\\
				1-\frac{S_2(x_1, x_2)}{S_2(x_2, x_2)} \;\;\; \text{if}\; X_1 > X_2.
	\end{cases}
	\end{equation}
\noindent
Similarly, the conditional survival function is
	$$
	S_{X_{(2)}|X_{(1)}}(x_{(2)}|x_{(1)}) = \frac{P(X_{(1)}=x_{(1)}, X_{(2)} > x_{(2)})}{P(X_{(1)}=x_{(1)})}
	=  \frac{S_{X_{(1)}}(x_{(1)},x_{(2)})}{f_{X_{(1)}}(x_{(1)})}.
	$$
	From lemmas \ref{lem:f_x} and \ref{lem:S_X_1} it follows that 
	\begin{equation}
    S_{X_{(2)}|X_{(1)}}(x_{(2)}|x_{(1)}) = 
    \begin{cases}
				\frac{S_1(x_1, x_2)}{S_1(x_1, x_1)} \;\;\; \text{if}\; X_1 < X_2\\
				\frac{S_2(x_1, x_2)}{S_2(x_2, x_2)} \;\;\; \text{if}\; X_1 > X_2.
	\end{cases}
	\end{equation}
	This completes the proof.
\end{proof}

\subsection*{A.3 Proof Theorem 3}
This section provides the proof of Theorem~\ref{thm:ats}. For the proof we need the following geometric series:

\begin{definition}\label{def:gs}
The following holds when $|r|<1$
\begin{align} \label{eq:gs}
	\sum_{t=k}^{\infty} \frac{t!}{(t-k)!} r^{t-k} = \frac{k!}{(1-r)^{k+1}} \;\; \text{for} \; k=0,1,2,3,... .
\end{align}
For $k=0$ this simplifies to the standard geometric series $\sum_{t=0}^{\infty} r^{t} = \frac{1}{1-r}$ and for $k=1$ we have $\sum_{t=1}^{\infty} t r^{t-1} = \frac{1}{(1-r)^2}$.
Some mathematical manipulation of Equation \eqref{eq:gs} results in the following series, which holds when $|r|<1$:
\begin{align*}
	&\sum_{t=k}^{\infty} t \frac{\left( t-\frac{k+1}{2} \right)!}{\left(t-k\right)!} r^{t-k} = \frac{\left(\frac{k-1}{2}\right)! \times \frac{k-1}{2}}{(1-r)^{\frac{k+1}{2}}} + \frac{\left(\frac{k+1}{2}\right)!}{(1-r)^{\frac{k+1}{2}+1}} \;\; \text{for } k=1,3,5,7,...\\
	&\sum_{t=k}^{\infty} t \frac{\left( t-\frac{k-2}{2} \right)!}{\left(t-k\right)!} r^{t-k} = \frac{\left(\frac{k}{2}\right)! }{(1-r)^{\frac{k}{2}}} + \frac{\left(\frac{k}{2}\right)!}{(1-r)^{\frac{k}{2}+1}} \;\;\;\;\;\;\;\; \text{for } k=2,4,6,...
\end{align*}
\end{definition}

\noindent
From definition \ref{def:gs} it follows that for $k=1,3,5$ and $|r|<1$,
	\begin{align}\label{eq:gs1}
		\begin{split}
		&\sum_{t=1}^{\infty} t r^{t-1} = \frac{0}{1-r} + \frac{1}{(1-r)^{2}}\\
		&\sum_{t=3}^{\infty} t(t-2) r^{t-3} = \frac{1}{(1-r)^2} + \frac{2!}{(1-r)^{3}}\\
		&\sum_{t=5}^{\infty} t(t-3)(t-4) r^{t-5} = \frac{2! \times 2}{(1-r)^{3}} + \frac{3!}{(1-r)^{4}}
	\end{split}
	\end{align}
	and that for $k=2,4,6$, 
	\begin{align} \label{eq:gs2}
		\begin{split}
		&\sum_{t=2}^{\infty} t r^{t-2} = \frac{1}{1-r} + \frac{1}{(1-r)^2}\\
		&\sum_{t=4}^{\infty} t(t-3) r^{t-4} = \frac{2}{(1-r)^{2}} + \frac{2}{(1-r)^{3}}\\
		&\sum_{t=6}^{\infty} t(t-4)(t-5) r^{t-6} = \frac{3!}{(1-r)^{3}} + \frac{3!}{(1-r)^{4}}
		\end{split}.
	\end{align}

\begin{proof}[Proof of Theorem~\ref{thm:ats}]
We use the shorthand notation as defined in Equations \eqref{eq:prob1} and \eqref{eq:prob2}, where $=$ and $\neq$ represent $X_{1}=X_{2}$ and $X_{1}\neq X_{2}$, respectively. Also, $S_1$ and $NS_1$ indicate a signal or no signal on the first event while $S_2$ and $NS_2$ indicate a signal or no signal on the second event. 

Recall that $ATS = ARL *E[TBE]$, in Equation (\ref{eq:ETBE}) we gave the expressions for $E[TBE]$ so we only need to derive an expression for the $ARL$. By definition $ARL = \sum_{i=1}^{\infty} i *P[\text{Signal at event }i]$. In our scenario we have two observations that can signal: we can get a signal on the first event or on the second event. Hence we can split our signal probability into:
\begin{align*}
P[\text{Signal at event }i] &= P[\text{No signal for all events upto } i-1]P[S_1] \\
&+ P[\text{No Signal for all events upto } i-2]P[NS_1,S_2,\neq].
\end{align*}
Hence, we get 
\begin{align}
	ARL &= \sum_{i=1}^{\infty} i *P[\text{No signal for all events upto } i-1]P[S_1]  \label{eq:s1}\\
	&+ \sum_{i=2}^{\infty} i *P[\text{No Signal for all events upto } i-2]P[NS_1,S_2,\neq] \label{eq:ns1s2}
\end{align}

There are two signal scenarios ($S1$ or $S2$), first we focus on the probability of observing a signal at the first event time $P[S_{1}]$ (Equation~\eqref{eq:s1}). We need the probability that we did not observed a signal upto event $i-1$. There are two no-signal scenarios: either we do not observe a signal with probability $P[NS_1, =]$ when $X_1=X_2$, or with probability $P[NS_{1}, NS_{2}, \neq]$ when $X_{1} \neq X_{2}$. We include all possible combinations of these two to obtain $i-1$ events without a signal. First, we consider only observing $[NS_1,=]$ for all $i-1$ events then, we consider the probability that we have one event $[NS_1,NS_2,\neq]$ and the other $i-3$ events are $[NS_1,=]$. We do this for all possible combinations and obtain the following summation:

\begin{align*}
	\sum_{i=1}^{\infty} i *P &[\text{No signal for all events upto } i-1]P[S_1] = \\
	& \sum_{i=1}^{\infty}iP[S_{1}]P[NS_{1}, =]^{i-1} \\
	&+\sum_{i=3}^{\infty}i (i-2) P[S_{1}] P[NS_{1}, NS_{2}, \neq] P[NS_{1}, =]^{i-3} \\
	&+\sum_{i=5}^{\infty}i \frac{(i-3)!}{2!(i-5)!} P[S_{1}] P[NS_{1}, NS_{2}, \neq]^{2} P[NS_{1}, =]^{i-5}\\
	&+\sum_{t=7}^{\infty}t \frac{(t-4)!}{3!(t-7)!} P[S_{1}] P[NS_{1}, NS_{2}, \neq]^{3} P[NS_{1}, =]^{t-7}\\
	&+......,
\end{align*}

Next, we focus on a signal at the second event time $[NS_{1}, S_{2}, \neq]$ in Equation~\eqref{eq:ns1s2}. Similarly, there are two no-signal scenarios, $[NS_1, =]$ or $[NS_1, NS_2, \neq]$. We include all combinations of signal and no-signal scenarios and we get 
\begin{align*}
	\sum_{i=2}^{\infty} i *P&[\text{No signal for all events upto } i-2]P[NS_1,S_2,X_1\neq X_2] = \\
	&\sum_{i=2}^{\infty}i P[NS_{1}, S_{2}, \neq] P[NS_{1}, =]^{i-2}\\
	&+\sum_{i=4}^{\infty}i (i-3) P[NS_{1}, S_{2}, \neq] P[NS_{1}, NS_{2}, \neq] P[NS_{1},=]^{i-4}\\
	&+\sum_{i=6}^{\infty}i \frac{(i-4)!}{2!(i-6)!} P[NS_{1}, S_{2}, \neq] P[NS_{1}, NS_{2}, \neq]^{2} P[NS_{1}, =]^{i-6}\\
	&+......,
\end{align*}

Adding up these two components, using Definition \ref{def:gs} and Equations \eqref{eq:gs1} and \eqref{eq:gs2} the $ARL$ is equal to

\begin{align*}
	ARL = &P[S_{1}] \left(\frac{0}{1-P[NS_{1}, =]} + \frac{1}{(1-P[NS_{1}, =])^{2}}\right)\\
	& + P[S_{1}] P[NS_{1}, NS_{2}, \neq] \left(\frac{1}{(1-P[NS_{1}, =])^{2}} + \frac{2}{(1-P[NS_{1}, =])^{3}}\right)\\
	& + P[S_{1}] P[NS_{1}, NS_{2}, \neq]^{2} \left(\frac{2}{(1-P[NS_{1}, =])^{3}} + \frac{3}{(1-P[NS_{1}, =])^{4}}\right)\\
	& + ......\\
	& + P[NS_{1}, S_{2}, \neq] \left(\frac{1}{1-P[NS_{1}, =]} + \frac{1}{(1-P[NS_{1}, =])^{2}}\right)\\
	& + P[NS_{1}, S_{2}, \neq] P[NS_{1}, NS_{2}, \neq] \left(\frac{2}{(1-P[NS_{1}, =])^{2}} + \frac{2}{(1-P[NS_{1}, =])^{3}}\right)\\
	& + P[NS_{1}, S_{2}, \neq] P[NS_{1}, NS_{2}, \neq]^{2} \left(\frac{3}{(1-P[NS_{1}, =])^{3}} + \frac{3}{(1-P[NS_{1}, =])^{4}}\right)\\
	& + ......
\end{align*}
This can also be written as
\begin{align*}
	ARL = &\sum_{i=0}^{\infty}P[S_{1}] P[NS_{1}, NS_{2}, \neq]^{i}\left(\frac{i}{(1-P[NS_{1}, =])^{i+1}} + \frac{i+1}{(1-P[NS_{1}, =])^{i+2}}\right)\\
	& + \sum_{i=0}^{\infty}P[NS_{1}, S_{2}, \neq] P[NS_{1}, NS_{2}, \neq]^{i} \left(\frac{i+1}{(1-P[NS_{1}, =])^{i+1}} + \frac{i+1}{(1-P[NS_{1}, =])^{i+2}}\right).
\end{align*}
By carrying out some mathematical manipulations and applying Equation \eqref{eq:gs} for $k=0,1$ (the traditional geometric series), we get the following expression for the $ARL$: 
\begin{align*}
	ARL  = &\frac{P[S_1]}{1-P[NS_{1}, =]}\sum_{i=0}^{\infty} i \left(\frac{P[NS_{1}, NS_{2}, \neq]}{1-P[NS_{1}, =]}\right)^{i} \\
	&+ \frac{P[S_1]}{(1-P[NS_{1}, =])^{2}}\sum_{i=0}^{\infty} (i+1)\left(\frac{P[NS_{1}, NS_{2}, \neq]}{1-P[NS_{1}, =]}\right)^{i} \\
	&+ \frac{P[NS_1,S_2,\neq]}{1-P[NS_{1}, =]}\sum_{i=0}^{\infty} (i+1) \left(\frac{P[NS_{1}, NS_{2}, \neq]}{1-P[NS_{1}, =]}\right)^{i} \\
	&+ \frac{P[NS_1,S_2,\neq]}{(1-P[NS_{1}, =])^{2}}\sum_{i=0}^{\infty} (i+1)\left(\frac{P[NS_{1}, NS_{2}, \neq]}{1-P[NS_{1}, =]}\right)^{i} \\
	= &\frac{P[S_1]}{1-P[NS_{1}, =]}\sum_{i=1}^{\infty} (i-1) \left(\frac{P[NS_{1}, NS_{2}, \neq]}{1-P[NS_{1}, =]}\right)^{i-1}\\ 
	&+ \frac{P[S_1]}{(1-P[NS_{1}, =])^{2}}\sum_{i=1}^{\infty} i \left(\frac{P[NS_{1}, NS_{2}, \neq]}{1-P[NS_{1}, =]}\right)^{i-1} \\
	&+ \left(\frac{P[NS_1,S_2,\neq]}{1-P[NS_{1}, =]}+ \frac{P[NS_1,S_2,\neq]}{(1-P[NS_{1}, =])^{2}}\right)\sum_{i=1}^{\infty} i \left(\frac{P[NS_{1}, NS_{2}, \neq]}{1-P[NS_{1}, =]}\right)^{i-1}\\
	= &\frac{P[S_{1}]P[NS_{1}, NS_{2}, \neq]+P[S_{1}]+(2-P[NS_{1},=])P[NS_{1}, S_{2}, \neq]}{(P[S_{1}]+P[NS_{1}, S_{2}, \neq])^{2}}\\
	= &\frac{1+P[NS_{1}, \neq]} {P[S_{1}]+P[NS_{1}, S_{2}, \neq]}.
\end{align*}

\noindent
Therefore,
$$
ATS^{OC} = \frac{1 + P_*[X_1 \neq X_2]P_*[NS_1|X_1\neq X_2]}{P_*[S_1] + P_*[X_1\neq X_2]P_*[NS_1,S_2| X_1\neq X_2]} E_*[TBE]
$$
This completes the proof. 
\end{proof}

\section*{Appendix B}
In this Appendix we provide more details on the selected bivariate lifetime distributions: the GBE, MOBE, and MOBW models. Table \ref{tab:par} gives the values of the parameters to obtain the synthetic data as used for the performance analysis in Table \ref{tab:ats}. For each model we discuss the survival function, estimation of the model parameters and simulation of random variables. In addition, Tables \ref{tab:gbe}-\ref{tab:mobw} provide the distribution functions and the probabilities of having $X_1<X_2$, $X_1>X_2$ or $X_1=X_2$, as well as expressions for the expected values of $X_1$, $X_2$ and the time-between-events. 

\begin{table}[ht]
	\centering
	\footnotesize
	\begin{tabular}{llllllllllllll}
		\hline \hline
		&&&&\multicolumn{3}{c}{GBE } &\multicolumn{3}{c}{MOBE} &\multicolumn{4}{c}{MOBW }  \\
		\cline{5-7} \cline{8-10} \cline{11-14}
		Scen&Shift&$E[X_1]$ &$E[X_2]$ &$\theta_1$&$\theta_2$&$\delta$ &$\lambda_1$&$\lambda_2$&$\lambda_{12}$&$\lambda_1$&$\lambda_2$&$\lambda_{12}$&$\eta$ \\
		\hline
		1. &IC&5&5&5&5&1&0.2&0.2&0&0.0314&0.0314&0&2\\
		&OC-I1&7.5&5&7.5&5&1&0.133&0.2&0&0.0140&0.0314&0&2\\
		&OC-I1&10&5&10&5&1&0.1&0.2&0&0.0079&0.0314&0&2\\
		&OC-I2&7.5&7.5&7.5&7.5&1&0.133&0.133&0&0.0140&0.0140&0&2 \\
		&OC-I2&10&10&10&10&1&0.1&0.1&0&0.0079&0.0079&0&2\\
		&OC-D1&2.5&5&*&*&*&*&*&*&0.1257&0.0314&0&2\\
		&OC-D2&2.5&2.5&*&*&*&*&*&*&0.1257&0.1257&0&2\\
		\hline
		2.&IC&5&5&5&5&0.5&0.164&0.164&0.036&0.0257&0.0257&0.0057&2\\
		&OC-I1&7.5&5&7.5&5&0.5&0.103&0.170&0.030&0.0098&0.0273&0.0041&2\\
		&OC-I1&10&5&10&5&0.5&0.073&0.173&0.027&0.0043&0.0278&0.0036&2\\
		&OC-I2&7.5&7.5&7.5&7.5&0.5&0.109&0.109&0.024&0.0114&0.0114&0.0025&2\\
		&OC-I2&10&10&10&10&0.5&0.081&0.081&0.018&0.0064&0.0064&0.0014&2\\
		&OC-D1&2.5&5&*&*&*&*&*&*&0.1114&0.0171&0.0143&2\\
		&OC-D2&2.5&2.5&*&*&*&*&*&*&0.1028&0.1028&0.0228&2\\
		\hline
		3.  &IC &5&15&5&15&1&0.2&0.067&0&0.0314&0.0035&0&2\\
		&OC-I1&7.5&15&7.5&15&1&0.133&0.067&0&0.0140&0.0035&0&2\\
		&OC-I1&10&15&10&15&1&0.1&0.067&0&0.0079&0.0035&0&2\\
		&OC-I2&7.5&22.5&7.5&22.5&1&0.133&0.044 &0&0.0140&0.0016&0&2 \\
		&OC-I2&10&30&10&30&1&0.1&0.033&0&0.0079&0.0009&0&2 \\
		&OC-D1&2.5&15&*&*&*&*&*&*&0.1257&0.0035&0&2\\
		&OC-D2&2.5&7.5&*&*&*&*&*&*&0.1257&0.0140&0&2\\
		\hline
		4. &IC &5&15&5&15&0.5&0.176&0.042&0.024&0.0282&3.17e-04&0.0032&2 \\
		&OC-I1&7.5&15&7.5&15&0.5&0.115&0.048&0.018&0.0124&1.90e-03&0.0016&2 \\
		&OC-I1&10&15&10&15&0.5&0.085&0.052&0.015&0.0068&2.46e-03&0.0010&2 \\
		&OC-I2&7.5&22.5&7.5&22.5&0.5&0.117&0.028&0.016&0.0126&1.41e-04&0.0014&2  \\
		&OC-I2&10&30&10&30&0.5&0.088&0.021&0.012&0.0070&7.93e-05&0.0008&2 \\
		&OC-D1&2.5&15&*&*&*&*&*&*&0.1139&8.25e-03&0.0117&2 \\
		&OC-D2&2.5&7.5&*&*&*&*&*&*&0.1130&1.26e-03&0.0127&2\\		 
		\hline\hline
	\end{tabular}
	\label{tab:par}
	\caption{Parameter values for GBE, MOBE and MOBW models for synthetic data generation}
\end{table}

\subsection*{B.1 Gumbel's Bivariate Exponential distribution}

The Gumbel’s Bivariate Exponential (GBE) distribution is the most well known model, which was first introduced by \cite{gumbel1960}. The GBE model assumed a failure mechanism driven by a random external stress factor. \cite{gumbel1960} provided two types of GBE models while \cite{hougaard1986} extended GBE type B model. Its survival function with parameters $ \theta_1,\theta_2, \delta$ is 
\begin{equation*} \label{eq:sf.GBE}
	S(x_1,x_2) = \exp\left( \left( \left(\frac{x_1}{\theta _1}\right)^{1/\delta}+ \left(\frac{x_2}{\theta_2}\right)^{1/\delta} \right)^\delta \right),
\end{equation*}

For parameter estimation of the GBE model parameters, one can derive the maximum likelihood estimators as:
$\hat{\theta}_1 = \Bar{x}_{1} = n^{-1} \sum_{i=1}^{n}x_{1i}$, $\hat{\theta}_2 = \Bar{x}_{2} = n^{-1} \sum_{i=1}^{n}x_{2i}$ and $\hat{\delta} = -(log2)^{-1} n^{-1} \sum_{i=1}^{n}\min\{x_{1i}/\Bar{x}_{1}, x_{2i}/\Bar{x}_{2}\}$. For more details on deriving these estimates, the reader is referred to \cite{lu1991inference}. 

In order to simulate data from the GBE model, first obtain $Q$ a uniform random variable (i.e., $Q\sim U(0,1)$) and $R=R_1+N_{\delta}R_2$ where $R_1$ and $R_2$ follows an exponential distribution with unit mean. Furthermore, $N_{\delta}=0$ with probability $1-\delta$ and $N_{\delta}=1$ with probability $\delta$. Next, compute $X_1=\theta_1 {Q^\delta} R$ and $X_2=\theta_2 {Q^{1-\delta}} R$. Table \ref{tab:gbe} provides more details on GBE distributed event data.

\begin{table}[h]
\caption{Characteristics of the GBE distribution}
\centering
\begin{tabular}{lrl}
	
\hline \hline

pdf &$f(x_1,x_2) =$& {\tiny
 $\left(\frac{x_1}{\theta_1}\right)^{(1/\delta) - 1 }\left(\frac{x_2}{\theta_2} \right)^{(1/\delta) - 1} C(x_1,x_2)^{\delta-2}  (C(x_1,x_2)^\delta + \frac{1}{\delta}-1) \exp(-C(x_1,x_2)^\delta)$} \\
&$C(x_1,x_2) =$&$ \left( \frac{x_1}{\theta_1}\right)^{1/\delta} +\left(\frac{x_2}{\theta_2} \right)^{1/\delta} $\\
\hline
 Survival function &$S(x_1,x_2)=$& $\exp\left(-C(x_1,x_2)^{\delta}\right)$ \\
\hline
Expectations &$E[X_1] =$&$\theta_1$ \\
&$E[X_2] =$&$\theta_2$ \\
&$E[X_{(1)}] =$&$C(1,1)^{-\delta}$\\
&$E[X_{(2)}] =$&$ \theta_1+\theta_2-C(1,1)^{-\delta}$ \\
&$E[TBE]=$&$0.5(\theta_1+\theta_2-C(1,1)^{-\delta})$ \\
\hline
Probabilities&$P[X< Y] =$ &$\frac{\theta_1^{-1/\delta}}{C(1,1)}$\\
&$P[X>Y] =$ &$\frac{\theta_2^{-1/\delta}}{C(1,1)}$\\
\hline \hline
\end{tabular}
\label{tab:gbe}
\end{table}

\subsection*{B.2 Marshall Olkin Bivariate Exponential distribution}
The MOBE model was built to model the life-time of a system with two-components which is affected by external shocks. The survival function with parameters $\lambda_1,\lambda_2, \lambda_{12} >0 $ is 
\begin{equation*}
	S(x_1,x_2) = \exp(-\lambda_1 x_1 - \lambda_2 x_2 - \lambda_{12} \max(x_1,x_2)), \,\,\, x_1, x_2 > 0
\end{equation*}
For parameter estimation of the MOBE model, one can derive the maximum likelihood estimations by solving the following maximum likelihood equations:
$n_1/\hat{\lambda}_1 + n_2/(\hat{\lambda}_1+\hat{\lambda}_{12}) = \sum x_{1, i}$, $n_1/(\hat{\lambda}_2 + \hat{\lambda}_{12})+n_2/\hat{\lambda}_2 = \sum x_{2, i}$ and $ n_1/(\hat{\lambda}_2 + \hat{\lambda}_{12}) + n_2/(\hat{\lambda}_1+\hat{\lambda}_{12}) + n_3/\hat{\lambda}_{12} = \sum \max (x_{1, i}, x_{2, i})$
where $n_1, n_2, n_3$ are the number of observations in the regions $X_1 <X_2$, $X_1 > X_2$, and $X_1=X_2$, respectively. For more details on estimation of the MOBE model, the reader is referred to \cite{bemis1972estimation,bhattacharyya1973test,proschan1976estimating}.

In order to simulate data according to the MOBE model, one first obtains $P, Q$ and $R$ as independent exponential distributed variables with mean $\lambda^{-1}_1$, $\lambda^{-1}_2$ and $\lambda^{-1}_{12}$, respectively. Next, compute $X_1=min(P,R)$ and $X_2=min(Q,R)$. Table \ref{tab:mobe} provides more details on MOBE distributed event data.
\begin{table}[h]
\caption{Characteristics of the MOBE distribution}
\centering
\begin{tabular}{lrl}
\hline \hline
pdf &$f(x_1,x_2) =$& 
$\begin{cases}
\lambda_{1}(\lambda_{2}+\lambda_{12}) \exp(-\lambda_{1}x_1-(\lambda_{2}+\lambda_{12})x_2) & x_1<x_2\\
\lambda_{2}(\lambda_{1}+\lambda_{12}) \exp(-(\lambda_{1}+\lambda_{12})x_1-\lambda_{2}x_2) & x_1>x_2\\
\lambda_{12}\exp(-\Lambda x_1) & x_1=x_2
\end{cases}$\\
&$\Lambda=$&$\lambda_1+\lambda_2+\lambda_{12}$\\
\hline
 Survival function &$S(x_1,x_2)=$& $\exp(-\lambda_1 x_1 - \lambda_2 x_2 - \lambda_{12}\max(x_1,x_2))$  \\
\hline
Expectations &$E[X_1] =$&$\frac{1}{\lambda_1 + \lambda_{12}}$\\
&$E[X_2] =$&$\frac{1}{\lambda_2 + \lambda_{12}}$ \\
&$E[X_{(1)}] =$&$\frac{1}{\Lambda}$\\
&$E[X_{(1)}|X_{1}=X_{2}] =$&$\frac{1}{\Lambda}$\\
&$E[X_{(2)}]=$&$\frac{1}{\Lambda}+\frac{\lambda_1}{\Lambda (\lambda_2+\lambda_{12})}+\frac{\lambda_2}{\Lambda (\lambda_1+\lambda_{12})}$\\
&$E[X_{(2)}|X_{1}\neq X_{2}]= $&$\frac{1}{\Lambda}+\frac{1}{\lambda_{1}+\lambda_{2}}(\frac{\lambda_{1}}{\lambda_{2}+\lambda_{12}}+\frac{\lambda_{2}}{\lambda_{1}+\lambda_{12}})$ \\
&$E[TBE]=$&$0.5(\frac{\lambda_{2}}{\Lambda^{2}}+\frac{\lambda_{2}}{\Lambda(\lambda_{1}+\lambda_{12})}+\frac{\lambda_{1}}{\Lambda^{2}}+\frac{\lambda_{1}}{\Lambda(\lambda_{2}+\lambda_{12})})+\frac{\lambda_{12}}{\Lambda^{2}}$ \\
\hline
Probabilities&$P[X< Y] =$ &$\frac{\lambda_1}{\Lambda}$\\
&$P[X>Y] =$ &$\frac{\lambda_2}{\Lambda}$\\
&$P[X=Y] =$& $\frac{\lambda_{12}}{\Lambda}$\\
\hline \hline
\end{tabular}
\label{tab:mobe}
\end{table}

\subsection*{B.3 Marshall Olkin Bivariate Weibull distribution}
In the MOBW model, random shocks affect the system and they are modelled as a non-homogeneous Poisson process. The MOBW model was developed by \cite{marshall1967multivariate} and its survival function with parameters $ \lambda_1,\lambda_2, \lambda_{12}, \eta $ is 
\begin{equation*}
	S(x_1,x_2) = \exp(-\lambda_1 x_1^{\eta} - \lambda_2 x_2^{\eta} - \lambda_{12} \max(x_1,x_2)^{\eta}), \,\,\, x_1, x_2 > 0
\end{equation*}

For parameter estimation of the MOBW model parameters, one can derive the maximum likelihood estimators by solving the following equations:$\hat{\lambda}_1 (\eta) = \frac{n_1} {\sum_{i=1}^{n}(r_i + 1)y_{i}^{\eta}}$ , $ \hat{\lambda}_2 (\eta) =  \frac{n_2} {\sum_{i=1}^{n}(r_i + 1)y_{i}^{\eta}}$, $ \hat{\lambda}_{12} (\eta) = \frac{n_12} {\sum_{i=1}^{n}(r_i + 1)y_{i}^{\eta}}$ and $ \hat{\eta} = h(\eta)$. For more details on deriving these estimates and more explanation of the symbols, the reader is referred to \cite{feizjavadian2015analysis}. For obtaining estimates by EM algorithm the reader is referred to \cite{kundu2009estimating}.

In order to simulate data according to the MOBW model, one first obtains $P, Q$ and $R$ as independent Weibull distributed variables with common shape parameter $\eta$ and scale parameters ${\lambda_1}^{-1/\eta}$, ${\lambda_2}^{-1/\eta}$ and ${\lambda_{12}}^{-1/\eta}$, respectively. 
Next, compute $X_1=min(P,R)$ and $X_2=min(Q,R)$. Table \ref{tab:mobw} provides more details on MOBW distributed event data.
\begin{table}[h]
\caption{Characteristics of the MOBW distribution}
\centering
\footnotesize
\begin{tabular}{lrl}
\hline \hline
pdf &$f(x_1,x_2) =$& 
$\begin{cases}
\eta^{2}\lambda_{1}(\lambda_{2}+\lambda_{12})x_1^{\eta-1}x_2^{\eta-1}\exp(-\lambda_{1}x_1^{\eta}-(\lambda_{2}+\lambda_{12})x_2^{\eta}) & x_1<x_2\\
\eta^{2}\lambda_{2}(\lambda_{1}+\lambda_{12})x_1^{\eta-1}x_2^{\eta-1}\exp(-(\lambda_{1}+\lambda_{12})x_1^{\eta}-\lambda_{2}x_2^{\eta}) & x_1>x_2\\
\eta\lambda_{12}x_1^{\eta-1}\exp(-(\lambda_{1}+\lambda_{2}+\lambda_{12})x_1^{\eta}) & x_1=x_2
\end{cases}$\\
&$\Lambda=$&$\lambda_1+\lambda_2+\lambda_{12}$\\
\hline
 Survival function &$S(x_1,x_2)=$& $ \exp(-\lambda_1 x_1^{\eta} - \lambda_2 x_2^{\eta} - \lambda_{12} \max(x_1,x_2)^{\eta}), \,\,\, x_1, x_2 > 0$  \\
\hline
Expectations &$E[X_1] =$&$\Gamma(1+\frac{1}{\eta})\frac{1}{(\lambda_{1}+\lambda_{12})^{1/\eta}}$\\
&$E[X_2] =$&$\Gamma(1+\frac{1}{\eta})\frac{1}{(\lambda_{2}+\lambda_{12})^{1/\eta}}$\\
&$E[X_{(1)}|X_{1}=X_{2}] =$&$\Gamma(1+\frac{1}{\eta})\frac{1}{\Lambda^{\frac{1}{\eta}}}$\\
&$E[X_{(2)}] =$&$ \Gamma(1+\frac{1}{\eta})(\frac{1}{(\lambda_{2}+\lambda_{12})^{1/\eta}}+\frac{1}{\Lambda^{1/\eta}})$\\
&$E[X_{(2)}|X_{1}\neq X_{2}] =$&$ \Gamma(1+\frac{1}{\eta})(\frac{\Lambda}{(\lambda_{2}+\lambda_{12})^{1/\eta}(\lambda_{1}+\lambda_{2})}+\frac{\Lambda}{(\lambda_{1}+\lambda_{12})^{1/\eta}(\lambda_{1}+\lambda_{2})}-\frac{\Lambda+\lambda_{12}}{\Lambda^{\frac{1}{\eta}}(\lambda_{1}+\lambda_{2})})$\\
&$E[TBE]=$&$0.5\Gamma(1+\frac{1}{\eta})(\frac{1}{(\lambda_{2}+\lambda_{12})^{1/\eta}}-\frac{\lambda_{2}+\lambda_{12}}{\Lambda^{1+1/\eta}}+\frac{1}{(\lambda_{1}+\lambda_{12})^{1/\eta}}-\frac{\lambda_{1}+\lambda_{12}}{\Lambda^{1+1/\eta}}+2\frac{\lambda_{12}}{\Lambda^{1+1/\eta}})$ \\
\hline
Probabilities&$P[X< Y] =$ &$\frac{\lambda_1}{\Lambda}$\\
&$P[X>Y] =$ &$\frac{\lambda_2}{\Lambda}$\\
&$P[X=Y] =$& $\frac{\lambda_{12}}{\Lambda}$\\
\hline \hline
\end{tabular}
\label{tab:mobw}
\end{table}

\section*{Appendix C: The EM algorithm for MOBW estimates}\label{Sec:GBE_EMalgo}
The MOBW distribution is observed as a shock model where the shocks are occurring as a non-homogeneous Poisson process. \cite{kundu2009estimating} provided an EM algorithm to obtain maximum likelihood estimates of the MOBW parameters. Their EM algorithm works when the data belong to all of the following sets;  
\begin{equation*}
	I_0=[i;X_{1i}=X_{2i}],  I_1=[i;X_{1i}<X_{2i}], I_2=[i;X_{1i}>X_{2i}],
\end{equation*}
where $i$ denotes subject. It is clearly seen from Figure~\ref{AIDS_data_plot} that the  AIDS dataset belongs to set $I_1$ only. To obtain the parameters of the AIDS dataset, we have redefined the EM algorithm as follows.

The log-likelihood function for case $I_1$ where $n= \vert{I_1} \vert$ can be written as,
\begin{equation}\label{log-lik_MOBW}
	\begin{aligned}
	l(\eta,\lambda_1,\lambda_2,\lambda_{12})= 
	&n\;ln(\eta\lambda_1)+n\; ln(\eta(\lambda_2+\lambda_{12}))+(\eta-1) \left[\sum_{i=1}^{n} ln(X_{1i})+\sum_{i=1}^{n} ln(X_{2i})\right]	\\
	&-\lambda_1 \sum_{i=1}^{n} X_{1i}^\eta -(\lambda_2+\lambda_{12}) \sum_{i=1}^{n} X_{2i}^\eta.
\end{aligned}
\end{equation}
To implement the EM algorithm, we obtain the \emph{E} step similarly as in \citep{kundu2009estimating} for which the pseudo-log-likelihood function derived by the log-likelihood function given in Equation~\ref{log-lik_MOBW} is defined as follows,
\begin{equation}\label{ps-log-lik_MOBW}
	\begin{aligned}
		l_{pseudo}(\eta,\lambda_1,\lambda_2,\lambda_{12})=
		&2n\;ln(\eta)+(\eta-1) \left[\sum_{i=1}^{n} ln(X_{1i})+\sum_{i=1}^{n} ln(X_{2i})\right]-\lambda_{12} \sum_{i=1}^{n} X_{2i}^\eta\\
		& + n\; \frac{\lambda_{12}}{(\lambda_1+\lambda_2)}\;ln(\lambda_{12})
		-\lambda_1 \sum_{i=1}^{n} X_{1i}^\eta+n\;ln(\lambda_1)-\lambda_2 \sum_{i=1}^{n} X_{2i}^\eta\\
		&+n\;\frac{\lambda_2}{(\lambda_{12}+\lambda_2)}\;ln(\lambda_2).
	\end{aligned}
\end{equation}
 For more details on pseudo-log-likelihood function see \cite{dinse1982nonparametric} and \cite{kundu2009estimating}. Further,  \emph{M} step involves maximizing the pseudo-log-likelihood function given in Equation~\ref{ps-log-lik_MOBW} with respect to $\eta,\lambda_1$, $\lambda_2$ and $\lambda_{12}$. It is noted that for the fixed $\eta$, the maximizing of Equation~\ref{ps-log-lik_MOBW} with respect to $\lambda_1,\lambda_2$ and $\lambda_{12}$ can be obtained as,
\begin{equation*}
	\begin{aligned}
		\hat{\lambda}_0(\eta)=\frac{n \lambda_{12}/(\lambda_1+\lambda_2)}{\sum_{i=1}^{n} X_{2i}^\eta}
	\end{aligned}
\end{equation*}

\begin{equation*}
	\begin{aligned}
		\hat{\lambda}_1(\eta)=\frac{n}{\sum_{i=1}^{n} X_{1i}^\eta}
	\end{aligned}
\end{equation*}

\begin{equation*}
	\begin{aligned}
		\hat{\lambda}_2(\eta)=\frac{n\lambda_2/(\lambda_{12}+\lambda_2)}{\sum_{i=1}^{n} X_{2i}^\eta}.
	\end{aligned}
\end{equation*}
Furthermore, the maximizing of Equation~\ref{ps-log-lik_MOBW} with respect to $\eta$ can be obtained by solving a fixed point type equation
\begin{equation*}
	\begin{scriptsize}
	\begin{aligned}
	g(\eta)=\frac{2n}{\hat{\lambda}_0(\eta) \sum_{i=1}^{n} X_{2i}^\eta\; ln(X_{2i})+\hat{\lambda}_1(\eta) \sum_{i=1}^{n} X_{1i}^\eta\; ln(X_{1i})+\hat{\lambda}_2(\eta) \sum_{i=1}^{n} X_{2i}^\eta\; ln(X_{2i})-\left[\sum_{i=1}^{n} ln(X_{1i})+\sum_{i=1}^{n} ln(X_{2i})\right]}.
	\end{aligned}
	\end{scriptsize}
\end{equation*}
The steps for implementation of the redefined EM algorithm were the same with the steps of EM algorithm proposed by \cite{kundu2009estimating}.

\end{document}